\def\bit{bit}
\def\kibi{Ki}
\def\mebi{Mi}
\def\gibi{Gi}
\def\byte{B}
\def\milli{m}
\def\second{s}
\def\percent{\%}
\newcommand{\qty}[2]{#1\,#2}
\def\eg{e.g.,\xspace}
\def\ie{i.e.\xspace}
\def\etal{et al.\xspace}
\def\whp{whp}
\def\nproc{\ensuremath{\mathfrak{P}}}
\def\algo#1{\textsl{#1}\xspace}
\def\FY{\algo{FY}}
\def\FYFull{\algo{Fisher-Yates Shuffle}}
\def\SandShufFull{\algo{ScatterShuffle}}
\def\IPS{\algo{IpSc}}
\def\IPSFull{\algo{In-place Scatter}}
\def\RoughScatter{\algo{RoughScatter}}
\def\ParRoughScatter{\algo{ParRoughScatter}}
\def\IPDS{\algo{IpScShuf}}
\def\IPDSFull{\algo{In-Place ScatterShuffle}}
\def\PIPDS{\algo{PIpScShuf}}
\def\PIPDSFull{\algo{Parallel In-Place ScatterShuffle}}
\def\Oh#1{\ensuremath{\mathcal O \! \left(#1\right)}}
\title{Engineering Shared-Memory Parallel Shuffling \\ to Generate Random Permutations In-Place}
\titlerunning{In-Place Shared-Memory Parallel Shuffling}
\author{Manuel Penschuck}{Goethe University Frankfurt}{rip_shuffle@algorithm.engineering}{https://orcid.org/0000-0003-2630-7548}{Supported by the Deutsche Forschungsgemeinschaft (DFG) under grant ME~2088/5-1 (FOR 2975 — Algorithms, Dynamics, and Information Flow in Networks)}
\authorrunning{M. Penschuck}
\keywords{Shuffling, random permutation, parallelism, in-place, algorithm engineering, practical implementation}
\begin{document}
\maketitle
\begin{abstract}
Shuffling is the process of rearranging a sequence of elements into a random order such that any permutation occurs with equal probability.
It is an important building block in a plethora of techniques used in virtually all scientific areas.
Consequently considerable work has been devoted to the design and implementation of shuffling algorithms.

We engineer, ---to the best of our knowledge--- for the first time, a practically fast, parallel shuffling algorithm with $\Oh{\sqrt{n}\log n}$ parallel depth  that requires only poly-logarithmic auxiliary memory.
Our reference implementations in Rust are freely available, easy to include in other projects, and can process large data sets approaching the size of the system's memory.
In an empirical evaluation, we compare our implementations with a number of existing solutions on various computer architectures.
Our algorithms consistently achieve the highest through-put on all machines.
Further, we demonstrate that the runtime of our parallel algorithm is comparable to the time that other algorithms may take to acquire the memory from the operating system to copy the input.
\end{abstract}

\section{Introduction}
Random permutations are heavily studied in many fields of science with numerous applications.
They are commonly considered an ``easy and fair'' arrangement and thus influence many aspects of everyday life ranging from shuffling a deck of cards in a friendly game to determining the fateful order in which soldiers are drafted for war (\eg \cite{exec_order_11497}).

In computer science, applications include numerical simulations,  sampling of complex objects, such as random graphs, machine learning, or statistical tests (\eg \cite{DBLP:journals/jct/BenderC78,DBLP:journals/corr/abs-2003-00736,NIPS2013_c3c59e5f,DBLP:journals/tomacs/Lemire19}).
Especially, if coupled with rejection sampling, shuffling can become a dominating subroutine (\eg \cite{DBLP:conf/alenex/Allendorf0PTW22} in which a prototype of this work was used without discussing any algorithmic details).
Further, the assumption that an input is provided in random order (instead of adversarially) allows for practical algorithms that are almost always efficient.
Among others, this notion motivates the random-order-model for online algorithms~\cite{ferguson1989solved}.
For the same reason, implementations of offline algorithms may start by shuffling their inputs;
for instances, folklore suggests to shuffle the input before sorting it with a simple \algo{Quicksort} implementation.

From an algorithmic point of view, the tasks of \emph{shuffling} and \emph{sorting} are tightly connected since both require an algorithm capable of emitting any permutation.
Though, while sorting needs to handle adversarial inputs, shuffling can be optimized for the well-behaved uniform distribution.
Shuffling can be implemented in linear-time via integer sorting by augmenting each input element with a uniform variate and sorting by it~\cite{DBLP:conf/ISCApdcs/CongB05}; we refer to this approach as \algo{SortShuffle}.
The famously impractical \algo{BogoSort} demonstrates the other direction, namely sorting by shuffling, but suffers from a ``slightly'' suboptimal expected runtime of $\Omega(n \cdot n!)$~\cite{DBLP:conf/fun/GruberHR07}.

The quest for \emph{in-place} algorithms is driven by the various costs of memory.
The most obvious aspect is that the maximal data set size that can be handled by a machine roughly halves if the output is produced in a copy.
Further, it takes a considerable time to allocate main memory on modern computer systems; in \cref{sec:evaluation} we demonstrate that the runtime of our shuffling algorithm is comparable to the time it takes to acquire additional memory of input size.
Another hidden cost is the increased code complexity to handle failed allocations of dynamic memory (\eg because the system ran out of memory). 
Finally, some programming languages have a concept of non-copyable data; \eg in C++ the copy-constructor can be deleted, and in Rust data types need to explicitly declare that they are copyable.

\subsection{Our contributions}
We design and implement the practical shared-memory parallel algorithm \PIPDSFull (\PIPDS).
\PIPDS has a parallel depth of $\mathcal O (\log (n) \sqrt{n k / \log k})$ and uses $\Oh{n \log_k (n)}$ work (whp) where $k$ is small tuning parameter.
The algorithm is an in-place modification of \SandShufFull~\cite{DBLP:journals/ipl/Sanders98} and incorporates ideas of \algo{MergeShuffle}~\cite{DBLP:conf/gascom/BacherBHL18} to further improve the hidden constants.
We provide fast shuffle implementations in a free and well-tested plug-and-play Rust library.
Our \PIPDS does not use heap allocations and emits reproducible permutations if a seedable pseudo-random number generator is provided.

After a discussion of notation and related work in \cref{sec:preliminaries,sec:sequential}, we derive the sequential \IPDSFull (\IPDS) in \cref{sec:sequential} and parallelize it in \cref{sec:parallel}.
In \cref{sec:implementation}, we discuss details of our implementations which we then evaluate in \cref{sec:evaluation}.

\section{Preliminaries and notation}\label{sec:preliminaries}
The expression $(x_i)_{i=a}^b$ denotes the sequence $x_a, \ldots, x_b$ and may be shortened to $(x_i)_i$ if the limits are implied by context.
We indicate an array of $n$ elements as $X[1..n]$ and reference the subrange $X[i], \ldots, X[j]$ as $X[i..j]$.
Further, $[n]$ denotes the set $\{1, \ldots, n\}$.
Then, a permutation is a bijection $\pi\colon [n] \to [n]$ where $\pi(i)$ encodes the position of the $i$-th input element in the output.
We say that a probabilistic statement holds \emph{with high probability} (whp) if the error probability is at most $1 / n$ for some implied parameter $n$.

\subsection{Parallel model of computation}
For parallel algorithms, we assume the commonly accepted binary Fork-Join model~\cite{DBLP:books/daglib/0023376}.
This choice fits the \texttt{rayon}\footnote{\url{https://crates.io/crates/rayon}} infrastructure used in our implementation well.
An execution starts with a single task on a unit-cost random access machine.
Additionally, any task $t_0$ can recursively \emph{fork} into two tasks~$t_1$ and~$t_2$.
In this case~$t_0$ waits until $t_1$ and $t_2$ complete their computation and \emph{join} to resume~$t_0$.
In practice, Fork-Join frameworks, such as oneTBB\footnote{previously known as Intel Threaded Building Blocks, \url{https://github.com/oneapi-src/oneTBB}}, Cilk\footnote{see also \url{https://www.opencilk.org}, \url{http://cilkplus.org}} \cite{DBLP:conf/irregular/Leiserson97}, or rayon, use a worker-pool in combination with a work-stealing scheduler to map tasks to cores.
Algorithmic performance measures are the \emph{work}, \ie the total number of instructions, and the \emph{parallel depth} (\emph{span}), defined as the length of the critical path which corresponds to the execution time assuming an unbounded number of workers.

\subsection{Random shuffling}
The sequential \algo{Fisher-Yates-Shuffle} (\FY, also know as \algo{Knuth-Shuffle})~\cite{DBLP:books/aw/Knuth81} obtains a random permutation of an array $A[1 .. n]$ in time $\Oh{n}$.
Conceptually, it places all items into an urn, draws them sequentially without replacement, and returns the items in the order  they were drawn.
The algorithm works in-place and fixes the value of $A[i]$ in iteration~$i\in [1...n{-}1]$ by swapping $A[i]$ with $A[j]$ where~$j$ is chosen uniformly at random from the not yet fixed positions $[i..n]$.
In other words, in the $i$-th iteration, the $(i{-}1)$-prefix of $A$ stores the result obtained so far, while the $(n{-}i)$-suffix represents the urn.

Shun \etal show that this seemingly inherently sequential algorithm exposes sufficient independence to be processed with logarithmic parallel depth (\whp)~\cite{DBLP:conf/soda/ShunGBFG15}.
Later, Gu \etal propose an in-place variant based on the so-called decomposition property of the parallel \FY~\cite{DBLP:conf/apocs/GuOS21}.
However, both algorithms are designed to solve a subtly different problem.
They \emph{permute} the input in an explicitly prescribed manner.
As such, the permutation is part of the input and the implementation\footnote{\url{https://github.com/ucrparlay/PIP-algorithms} master at time of writing (\texttt{6af1df9})} of \cite{DBLP:conf/apocs/GuOS21} uses two additional pointers per element (\ie shuffling \qty{32}{\bit} values on a \qty{64}{\bit} machine leads to a five-fold increase of memory).

A random permutation can be computed in parallel by~\nproc{} processors by assigning each element to one of~\nproc{} buckets uniformly at random and then applying the sequential algorithm to each bucket~\cite{DBLP:journals/ipl/Sanders98}.
We refer to this algorithm as \SandShufFull and build on it in \cref{sec:sequential}.
A similar technique yields an I/O-efficient random permutation algorithm~\cite{DBLP:journals/ipl/Sanders98}.

Going the opposite direction also yields an efficient algorithm.
\algo{MergeShuffle} (\algo{MS}) first assigns each processor a contiguous section of the input array, shuffles the subproblems pleasingly parallel and finally recursively merges them~\cite{DBLP:conf/gascom/BacherBHL18}.
Merging can be interpreted as the inverse of \SandShufFull's scatter with two buckets.
In a precursor study, we found it too slow to generalize \algo{MergeShuffle} using $k$-way merging which is needed to reduce the recursion depth.
However, our \RoughScatter routine in \cref{subsec:rough-shuffle} builds on the insight that binary random merging can be implemented using a single random bit for most elements.

Cong and Bader~\cite{DBLP:conf/ISCApdcs/CongB05} empirically study additional techniques such as shuffling using sorting algorithms (\algo{SortShuffle}) or random dart-throwing (\algo{DartThrowingShuffle}).
We are, however, unaware of how to implement these approaches in-place.\footnote{
	Our \IPDS algorithm can be interpreted as an optimized in-place RadixSort in which buckets are randomly drawn.
	As such, there are conceptual similarities to \algo{SortShuffle}.
}

\subsection{Sampling from discrete distributions}
In the following, we sample from several discrete probability distributions (arguably, shuffling is just that).
This is achieved by first obtaining a stream of independent and unbiased random bits that are subsequently reshaped to attain the required distribution.
The default way of implementing the first step is using a pseudo-random generator, such as Pcg64Mcg~\cite{oneill:pcg2014}.

Sampling an integer from $[0, s)$ with $s=2^k$ for some $k \in \mathbb N$ from random words is very cheap and involves only shifting and masking.
We adopt rejection-based algorithms with expected constant time to sample uniform variates from $[0, s]$ for general $s$ (see~\cite{DBLP:journals/tomacs/Lemire19}) and binomial variates (see~\cite{DBLP:books/sp/Devroye86}).
Sampling of $k$-dimensional multinomial variates is implemented by chaining appropriately parametrized binomial samples in expected time $\Oh{k}$.

\section{Sequential in-place shuffling}\label{sec:sequential}
\begin{figure}
\begin{lstlisting}[tabsize=2,basicstyle=\footnotesize\tt,numbers=left, frame=single, backgroundcolor=\color{white}]
pub fn fisher_yates<R: Rng, T>(rng: &mut R, data: &mut [T]) {
  for i in (1..data.len()).rev() {
    data.swap(i, uniform::gen_index(rng, i+1)); // partner from [0..i]
}}\end{lstlisting}

	\caption{Fisher Yates implementation in Rust}
	\label{alg:fisher-yates}
\end{figure}

In this section, we propose \IPDSFull (\IPDS), a sequential in-place variant of Sanders' parallel \SandShufFull~\cite{DBLP:journals/ipl/Sanders98}.
\Cref{alg:ipds} summarizes the algorithm.
Building on the performance results obtained, we reintroduce parallelism
in \cref{sec:parallel}.

\subsection{State of the art}
The shuffle algorithm most commonly used  in practice seems to be the simple and fast \FYFull (\FY)~\cite{DBLP:books/aw/Knuth81}.
In fact, \cref{alg:fisher-yates} is the exact ``na\"ive'' implementation used in \cref{sec:evaluation} and nearly identical to the version included in the \texttt{rand} crate, the de facto standard randomization library in the Rust ecosystem.
Due to its simplicity, the algorithm outperforms the more advanced schemes for small inputs.
However, \FY's unstructured accesses to main memory cause a severe slowdown for larger inputs. 
This is especially relevant for parallel algorithms where the memory subsystem is shared between cores (see \cref{sec:evaluation}).

\SandShufFull  is designed to be a parallel algorithm that also fares well in the external memory model~\cite{DBLP:journals/ipl/Sanders98}.
Given an input $(x_i)_{i=1}^n$, the algorithm moves each input element~$x_i$ into a bucket drawn independently and uniformly from $B_1, \ldots, B_k$.
Afterwards, each bucket constitutes an independent subproblem of expected $\Theta(n/k)$ elements on which we recurse.
For small subproblems, we switch to \FY as the base case algorithm.

In the original parallel formulation of \SandShufFull, the number of buckets~$k$ equals the number of processing units~$\nproc$ to expose the maximal degree of parallelism.
Sanders, however, already discusses that in the presence of memory hierarchies, the parameter~$k$ should be chosen sufficiently small such that the individual processors can cache at least the tail of each bucket.
In his implementation\footnote{\url{https://web.archive.org/web/20050827081959/http://www.mpi-sb.mpg.de/~sanders/programs/randperm/} [sic]}, the parameter is $k = 32$ for the largest runs in~\cite{DBLP:journals/ipl/Sanders98}.
At time of writing ---more than two decades later, using very different hardware to run experiments with more than three orders of magnitude larger data sets--- we empirically find $k \le 256$ to be the best choice for our $\IPDSFull$ over a wide range of input sizes.
Hence, $k$ should be intuitively treated as a small constant that governs primarily the branching factor of the recursion.
In \cref{sec:parallel}, we will add parallelism independent of $k$.

Our main modification leading to \IPDS is \IPSFull (\IPS) which scatters the input into $k$ buckets.
Formally, let $X = (x_i)_{i=1}^n$ be the input and $A=(a_i)_{i=1}^n$ be independent uniform variables from $[1, k]$ indexing into the aforementioned buckets.
Then, \IPS groups $X$ by $A$ by rearranging the elements in $X$ with some permutation~$\pi$ that sorts $A$.

Similar problems have been studied in the context of integer sorting.
The special case of $k = 2$ (\ie binary partition) and $k = 3$ (known as the \emph{Dutch national flag problem}) can be efficiently solved in-place~\cite{DBLP:books/ph/Dijkstra76, meyerfailure}.
For $k>3$, two-pass approaches can be used (\eg \emph{American flag sort} \cite{DBLP:journals/csys/McIlroyBM93}) but require repeated access each $a_i$.
The parallel implementation~\cite{DBLP:conf/icse/SinglerK08} of \SandShufFull included in \texttt{libstdc++} uses this technique and stores $A$ explicitly requiring $\Theta(n \log k)$ bits.
Another way, in the spirit of \cite{DBLP:journals/jpdc/FunkeLMPSSSL19}, is to require a pseudo-random generator that can be replayed multiple times by copying and retrieving the generator's internal state.

\subsection{IpScShuf --- an in-place implementation of ScatterShuffle}\label{subsec:ipscs}
\begin{figure}
	\begin{lstlisting}[tabsize=2,basicstyle=\footnotesize\tt,numbers=left, frame=single, backgroundcolor=\color{white}]
		pub fn shuffle(&self, rng: &mut R, data: &mut [T]) {
			if data.len() <= self.config.seq_base_case_size() {
				return self.config.seq_base_case_shuffle(rng, data); 
			}
			let recurse = |rng: &mut R, dat: &mut [T]| self.shuffle(rng, dat);
			let mut buckets = split_slice_into_equally_sized_buckets(data);
			
			rough_scatter(rng, &mut buckets); // assign most items to random buckets
			
			{ // FineScatter: assign the remaining items
				let num_staged = buckets.iter().map(|b| b.num_staged()).sum();
				let final_sizes = sample_final_sizes(rng, num_staged, &buckets);
				move_buckets_to_fit_final_sizes(&mut buckets, &final_sizes);
				shuffle_stashes(rng, &mut buckets, recurse);
			}
			
			// recurse on the buckets
			buckets.iter_mut().foreach(|b| recurse(rng, b.data_mut()));
		}	
	\end{lstlisting}

	\caption{Implementation of \IPDSFull in Rust up to renaming.}
	\label{alg:ipds}
\end{figure}
In the following, we describe \IPSFull (\IPS) that supports true random bits, and, whp, runs in linear time using only $\Oh{k \log k}$ bits additional storage.
Since each of the $n$ items is assigned a uniformly selected bucket, the numbers $(n_i)_{i=1}^k$ of elements assigned to each bucket follow a multinomial distribution and are tightly concentrated around $n / k$.

For the remainder, we assume that $n \gg k (\log k)^3$ and $k^3 \log k = \Oh{n}$, since otherwise, the problem is so small that \FYFull is more appropriate.
These assumptions are only needed to bound \IPDS's complexity and do not affect its correctness.
In practice, they translate to a minimal recommended size of roughly $10^5$ elements.

A straightforward solution is to draw the sizes of all buckets as a multinomial random variate.
We then sample the buckets without replacement weighted by their decreasing target size.
This can be implemented in expected linear time using suitable dynamic weighted sampling data structures (\eg \cite{DBLP:journals/mst/MatiasVN03}).
As discussed further in \cref{sec:implementation}, such approaches are outperformed by the following scheme.
Inspired by the framework of \cite{DBLP:conf/gascom/BacherBHL18} (but quite different in its details), our assignment task consists of two phases.
Firstly, during the \RoughScatter phase, we very efficiently assign the vast majority of items --- but almost certainly not all of them.
Secondly, during the \algo{FineScatter} phase, we process the remaining few elements.

\subsection{RoughScatter --- the opportunistic work horse}\label{subsec:rough-shuffle}
\RoughScatter exploits the aforementioned concentration of the final bucket sizes around their mean of $n / k$ to assign elements in an opportunistic fashion until we hit said $n/k$ barrier. 
Let $X[1..n]$ denote the input array.
As illustrated in \cref{fig:buckets}, we partition $X$ into $k$ contiguous buckets of equal sizes $n / k$ (up to rounding).
Each bucket $B_i$ is stored as a triple of indices $(b_i, s_i, e_i)$ where $b_i$  points to the beginning of $B_i$ and $e_i$ beyond the bucket's end.
A bucket is further subdivided into (i) an initially empty segment $X[b_i..s_i)$ of so-called \emph{placed} items and (ii) $X[s_i..e_i)$ of so-called \emph{staged} items.
We say that bucket $B_i$ is \emph{full} iff all items are placed, i.e. $s_i = e_i$.
Up to the last step in \cref{subsec:fine-scatter}, \emph{staged} items can be freely moved around, whereas the position of \emph{placed} items carries meaning.

\begin{figure}
	\begin{center}
		\scalebox{0.8}{
			\begin{tikzpicture}
				\def\bWidth{30mm}
				\def\bHeight{6mm}
				\def\bFill{0.6}
				\def\numBuckets{3}
				
				\foreach \j in {1 ,..., \numBuckets} {
					\node[fill=red!30, minimum height = \bHeight, anchor=south west, minimum width=\bWidth] at (\j * \bWidth - \bWidth, 0)  {};	
					\node[fill=green!10, minimum height = \bHeight, anchor=south west, minimum width=\bFill * \bWidth] at (\j * \bWidth - \bWidth, 0)  {};	
					\path[draw] (\bWidth * \j - \bWidth + \bFill * \bWidth, 0) to ++(0, \bHeight);
				}

				\foreach \j in {0 ,..., \numBuckets} {
					\path[draw, thick] (\bWidth * \j, 0) to ++(0, \bHeight);
				}
				
				\path[draw, thick] (0, 0) to ++(\numBuckets * \bWidth, 0);
				\path[draw, thick] (0, \bHeight) to ++(\numBuckets * \bWidth, 0);
				
				\node[anchor=south west, minimum width=\bWidth] at (\bWidth, \bHeight) {Bucket $B_i$};	
				
				\node[anchor=south west, minimum width=\bWidth+1em, minimum height=\bHeight+2em, fill=white, opacity=0.8] at (-1em-0.5pt,-1em) {};

				\node[anchor=south west, minimum width=\bWidth+1em, minimum height=\bHeight+2em, draw=white, fill=white, opacity=0.8] at (2*\bWidth +0.5pt,-1em) {};

				\foreach \x/\l in {0/$b_i$, \bFill/$s_i$, 1/$e_i$} {
					\path[draw, <-, thick] (\bWidth + \x * \bWidth + 0.05 *\bWidth, -0.1*\bHeight) to ++(0, -0.9em);
					\node[anchor=base, baseline, inner sep=0] at (\bWidth + \x * \bWidth + 1.1em, -1em) {\l};
				}
				
				\foreach \x in {0,..., \numBuckets0} {
					\path[draw, opacity=0.2] (\x * \bWidth / 10, 0) to ++(0, \bHeight);
				}
				
				\node[anchor=base, baseline, yshift=-0.25em] at (-2em, \bHeight / 2) {$\cdots$};
				
				\node[anchor=base, baseline, yshift=-0.25em] at (\bWidth + \bWidth * \bFill / 2, \bHeight / 2) {\small placed};
				\node[anchor=base, baseline, yshift=-0.25em] at (1.5 * \bWidth + \bWidth * \bFill / 2, \bHeight / 2) {\small staged};		
				
				\node[anchor=base, baseline, yshift=-0.25em] at (\numBuckets * \bWidth + 2em, \bHeight / 2) {$\cdots$};

		\end{tikzpicture}}
	\end{center}

	\caption{
		\IPDS partitions the input into $k$ buckets, each roughly containing $n / k$ elements.
		Initially, all items are \emph{staged} ($b_i = s_i$) and the bucket is said to be \emph{empty}.
		Eventually, more and more items are placed (from the left). If $s_i = e_i$ the bucket is said to be \emph{full}.
	}
	
	\label{fig:buckets}
\end{figure}
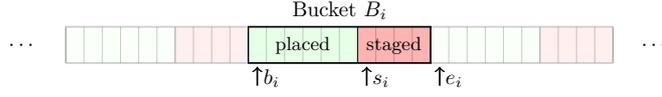

In each iteration, the algorithm finalizes the bucket assignment of the element~$x$ that $s_1$ points to, \ie the first staged element in $B_1$ at that point in time.
To this end, we randomly draw a partner bucket $j$ uniformly from $[1..k]$, swap the elements $X[s_1] \leftrightarrow X[s_j]$ (skipped if $j=1$), and increment $s_j$.
As a result, element~$x$ is moved into the \emph{placed} region of the partner bucket~$B_j$.
If $B_j$ is now full, the algorithm stops, otherwise it repeats.

\begin{lemma}
	Let $P$ be the set of elements placed by \RoughScatter and let $x \in P$ be an arbitrary placed item.
	Then $x$ is assigned to bucket $B_i$ with probability~$1 / k$.
\end{lemma}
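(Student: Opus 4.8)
The plan is to unroll \RoughScatter into a sequence of elementary placement steps and to exploit that the coin fixing the target bucket of an element is independent of the earlier event that this element happened to be the one selected for processing (a ``principle of deferred decisions'' argument). I model the randomness as an infinite stream of independent variates $J_1, J_2, \ldots$, each uniform on $[1..k]$, where $J_t$ is the partner bucket drawn in iteration $t$. Let $T$ be the (random) number of iterations \RoughScatter performs --- the first iteration after which some bucket is full --- and for $1 \le t \le T$ let $y_t$ denote the element occupying position $s_1$ at the start of iteration $t$, i.e.\ the element that iteration ``finalizes''. Two structural facts follow directly from the swap-based description. \textbf{(i)} In every iteration the element occupying $s_1$ becomes placed --- even in the terminating one, since the fullness check is performed only after the swap and the increment of $s_j$ --- and a placed element never re-enters the staged region (each swap exchanges two currently staged elements, and only the one leaving $s_1$ gets reclassified as placed). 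Hence $y_1, \ldots, y_T$ are pairwise distinct, $P = \{y_1, \ldots, y_T\}$, and $y_t$ ends up in bucket $B_{J_t}$. \textbf{(ii)} The configuration at the start of iteration $t$ --- in particular the identity of $y_t$ and whether iteration $t$ occurs at all --- is a deterministic function of $J_1, \ldots, J_{t-1}$ alone.

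Next I fix an arbitrary input element $x$; one may assume $\Pr[x \in P] > 0$, since otherwise the claim is vacuous. By Fact~(i) the events $\{x = y_t\}$ for $t \ge 1$ are pairwise disjoint (with the $t$-th event empty once $t > T$), their union is $\{x \in P\}$, and on $\{x = y_t\}$ the element $x$ is assigned to $B_i$ precisely when $J_t = i$; thus $\{x \in B_i\} = \bigcup_{t \ge 1}\bigl(\{x = y_t\} \cap \{J_t = i\}\bigr) \subseteq \{x \in P\}$. By Fact~(ii) the event $\{x = y_t\}$ is determined by $J_1, \ldots, J_{t-1}$, hence independent of $J_t$, so
\begin{equation*}
	\Pr[x \in B_i] \;=\; \sum_{t \ge 1} \Pr[x = y_t]\,\Pr[J_t = i] \;=\; \frac1k \sum_{t \ge 1} \Pr[x = y_t] \;=\; \frac1k\,\Pr[x \in P].
\end{equation*}
Dividing by $\Pr[x \in P]$ gives $\Pr[x \in B_i \mid x \in P] = 1/k$, as claimed; summing over $i$ is consistent with $\sum_i 1/k = 1$.

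The only delicate point I anticipate is a fully rigorous argument for Fact~(ii): that the element processed in iteration $t$ is already pinned down before $J_t$ is sampled and cannot be swayed by it. This has to be read carefully off the in-place implementation --- when $j = 1$ the pointer $s_1$ merely advances, and when $j \neq 1$ the first staged element of $B_j$ is swapped into position $s_1$, so in either case the new content of $s_1$ is a function of the configuration immediately before the step, which is itself built up solely from $J_1, \ldots, J_{t-1}$. It is worth noting that the bucket capacities never enter this reasoning; they only determine the stopping time $T$, so the statement is insensitive to the rounding involved in splitting $X$ into buckets of size $\approx n/k$.
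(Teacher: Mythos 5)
Your proof is correct and follows essentially the same route as the paper's: identify the unique iteration in which $x$ is placed, observe that the identity of the element processed in that iteration is determined by the earlier random draws, and conclude that the uniform bucket coin for that iteration fixes $x$'s bucket with probability $1/k$. The paper states this in a few informal sentences (``there is a unique iteration for each item $x \in P$ in which it gets placed \ldots{} each assignment is carried out independently''), whereas you supply the deferred-decisions formalization via the filtration generated by $J_1, \ldots, J_{t-1}$ --- a welcome tightening, not a different argument.
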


\begin{proof}
	With loss of generality, we assume that initially all items are staged. 
	Then, there is a unique iteration for each item $x \in P$ in which it gets placed.
	To this end, the then still staged element~$x$ is swapped with a staged item~$y$ where $y \in B_i$ with probability~$1/k$.
	It then increases $s_i$ and thereby defines $x$ as placed.
	Since \RoughScatter only swaps staged items, this placement of $x$ is final.
	The possible change of position of element $y$ is inconsequential, since each assignment is carried out independently.
\end{proof}

\noindent
In the following, we bound the number elements that remain staged after \RoughScatter.

\begin{lemma}\label{lem:r_bound}
	After a \RoughScatter run, let $r_i = e_i - s_i$ be the number of items still staged in bucket $b_i$ and $R = \sum_i r_i$ their sum.
	For $n \gg k (\log k)^3$, we have $R \le \sqrt {2 n k \log k}$ whp.
\end{lemma}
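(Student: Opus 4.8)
The plan is to recognise $R$ as the shortfall $n-T$ of a balls-into-bins stopping time $T$ and then bound the deviation of $T$ by a union bound over the buckets together with a Chernoff tail estimate. Each iteration of \RoughScatter places exactly one element --- the one currently at $s_1$ --- and increments the placed-counter of exactly one bucket chosen uniformly at random, and the routine halts the moment some bucket reaches its capacity $c$ (equal to $\lfloor n/k\rfloor$ or $\lceil n/k\rceil$). Hence, if $T$ is the number of iterations actually performed, then $T$ elements are placed and $R=n-T$. Equivalently, letting $M_i(t)$ be the number of the first $t$ independent uniform partner-bucket draws that equal $i$, the vector $(M_1(t),\dots,M_k(t))$ is multinomial with $M_i(t)\sim\mathrm{Bin}(t,1/k)$, and $T$ is the first time $t$ with $\max_i M_i(t)\ge c$. (Coupling instead with the full assignment of all $n$ items and reading $R$ off the overflowing bins only lower-bounds $R$, since stopping early also strands items destined for buckets that are not yet full.)

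Set $R_0:=\sqrt{2nk\log k}$ and $t_0:=n-R_0$. By the reformulation, $R\ge R_0$ holds precisely when some bucket is already full after $t_0$ placements, i.e. when $\max_i M_i(t_0)\ge c$; and $M_i(t_0)$ has mean $\mu:=t_0/k=c-\sqrt{2c\log k}$ (up to the rounding of the bucket sizes, which I would absorb into lower-order terms). I would then write
\[
  \Pr[R\ge R_0]\;\le\;\sum_{i=1}^{k}\Pr\!\left[M_i(t_0)\ge c\right]\;=\;k\cdot\Pr\!\left[\mathrm{Bin}(t_0,1/k)\ge\mu+\sqrt{2c\log k}\,\right]
\]
and estimate the binomial upper tail with a multiplicative Chernoff inequality. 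The deviation above the mean is $\sqrt{2c\log k}$, and since $n\gg k(\log k)^3$ forces $c=n/k\gg(\log k)^3$, this deviation is $o(\mu)$, so the Chernoff bound is in its quadratic regime and its exponent equals $\big(1+o(1)\big)\tfrac{2c\log k}{2\mu}=\big(1+o(1)\big)\log k$. Each summand is thus $k^{-1+o(1)}$, and the union bound over the $k$ buckets delivers the claim.

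The crux --- and the reason the constant under the root is exactly $2$ --- is this calibration: the factor $2$ is what makes the Chernoff exponent match $\log k$ and so cancel the union-bound factor $k$. Making this rigorous means controlling two lower-order effects on the exponent, namely the downward shift of the mean from $c$ to $c-\sqrt{2c\log k}$ and the cubic and higher corrections in the Chernoff rate function $x\mapsto(1+x)\ln(1+x)-x$; both are of order $(\log k)^{3/2}/\sqrt{c}$, and the hypothesis $n\gg k(\log k)^3$ is exactly what renders them negligible. Accordingly I would organise the argument as (i) the identity $R=n-T$ and the reduction to $\max_i M_i(t_0)\ge c$; (ii) the per-bucket binomial tail via an explicitly stated Chernoff inequality; (iii) the verification, using $n\gg k(\log k)^3$, that the lower-order corrections to the exponent vanish; (iv) the union bound over the $k$ buckets. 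Step (iii) is the only genuinely technical point; the remainder is bookkeeping.
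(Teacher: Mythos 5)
Your step (i) is the same reduction the paper uses: view the placements as a balls-into-bins process, let $T$ be the first time some bin reaches its capacity $c\approx n/k$, note $R=n-T$, and translate $R\ge R_0$ into the event that the maximum load after $t_0=n-R_0$ throws already reaches $c$. (Your fixed-time formulation is, if anything, slightly cleaner than the paper's, which informally inverts the max-load function $M(n',k)$ at the random stopping time $n'$.) The difference is that the paper then cites the Raab--Steger maximum-load theorem as a black box, whereas you propose to re-derive the tail bound from a multiplicative Chernoff inequality plus a union bound --- and that re-derivation, as calibrated, does not close.

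The gap is exactly at the point you call the crux. With the constant equal to $\sqrt 2$, the Chernoff exponent is $\log k + \Theta\bigl(\sqrt{(\log k)^3/c}\bigr)$, and the hypothesis $n\gg k(\log k)^3$ forces the correction term to be $o(1)$ --- which works against you, not for you. Each summand is then $e^{-\log k - o(1)}=(1-o(1))/k$, and the union bound over $k$ buckets gives $\Pr[R\ge R_0]\le 1-o(1)$, which is vacuous. Your step (iii) verifies that the lower-order corrections to the exponent vanish, but vanishing corrections are precisely why the first-moment argument fails at the critical constant: you would need the exponent to exceed $\log k$ by $\omega(1)$, i.e.\ a per-bucket probability of $o(1/k)$, and no amount of bookkeeping with the plain Chernoff bound produces that. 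The standard repair is to use a binomial tail estimate that retains the Gaussian prefactor, $\Pr[\mathrm{Bin}(t_0,1/k)\ge\mu+\lambda]=\Oh{(\sqrt{\mu}/\lambda)\,e^{-\lambda^2/(2\mu)}}$; the extra factor $\sqrt{\mu}/\lambda=\Theta(1/\sqrt{\log k})$ is what turns the union bound into $\Oh{1/\sqrt{\log k}}=o(1)$, and this refinement is essentially the content of the Raab--Steger theorem the paper invokes. (Even then one obtains failure probability $o(1)$ rather than the $1/n$ the paper's definition of whp nominally demands --- a looseness the paper shares --- but your version as written proves nothing at all.) Alternatively, relaxing the constant to $(1+\varepsilon)\sqrt 2$ would make your plain Chernoff computation go through at the cost of a weaker statement.
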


\begin{proof}
	We interpret the input to \RoughScatter as $n$ \emph{balls} that are independently thrown into $k$ uniform bins.
	If we run the balls-into-bins experiment to completion, the maximal load of any bucket is at most $M(n, k) = \frac n k + \sqrt{2 \frac n k \log k}$ whp~\cite{DBLP:conf/random/RaabS98}.\footnote{A similar argument was already used in the analysis of \SandShufFull~\cite{DBLP:journals/ipl/Sanders98}.}

	Let $n'$ be the number of balls assigned in said game when the maximal load first reached $n/k$.
	Algorithmically, this corresponds to the termination of \RoughScatter.
	By identifying $M(n', k) = n / k$ and solving for $n'$, we find that whp $n' \ge n - \sqrt{2nk \log k} := n -R$. 
\end{proof}

\begin{remark}
	The fraction of unprocessed elements $R / n$ vanishes for $n \to \infty$. 
	Even for small inputs with $n = 2^{22}$ and practical $k=64$, less than $1 \%$ of the input remains unassigned whp.
\end{remark}

\subsection{FineScatter --- fixing the small remainder}\label{subsec:fine-scatter}
After the execution of \RoughScatter only $R = \Oh{\sqrt{n k \log k}}$ items need to be assigned during the \algo{FineScatter} phase whp.
If our initial assumption still holds for $R \gg k (\log k)^3$, we can compact the staged items into a contiguous memory area, apply \RoughScatter and recurse. 
However, for small inputs the assumption is likely violated, while for large inputs the fraction $R / n$ contributes only negligibly to the total runtime.
Thus, we do not consider it worthwhile to devise a merging procedure for this case and instead directly use a dedicated base case algorithm based on the following Lemma.
It lays out the route to efficiently obtain the final bucket sizes and independently assign the remaining elements.

\begin{lemma}	\label{lem:multinomial_shuffle}
	Let $X = (x_i)_{i=1}^n$ be a sequence and  $N = (n_i)_{i=1}^k$ be sampled from a multinomial distribution with equal weights $p = 1/k$ such that $\sum_i n_i = n$.
	Let $f_N\colon [n] \to [k]$ be an arbitrary partition of $X$ with class sizes $N$.
	Finally, let $\pi\colon [n] \to [n]$ a random permutation.
	Then, for fixed $i$ and $j$, the probability that element $x_i$ is mapped by $f_N(\pi(i))$ to class $j$ is $1/k$.
\end{lemma}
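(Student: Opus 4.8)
The plan is to condition on the multinomial size vector $N$ and exploit that the random permutation $\pi$ is independent of $N$, combined with the elementary fact that a single coordinate of a uniform random permutation is uniformly distributed. Fix $i \in [n]$ and $j \in [k]$. The map we care about is $i \mapsto f_N(\pi(i))$, in which there are two independent sources of randomness: the vector $N$ (which also determines the otherwise arbitrary partition $f_N$), and the permutation $\pi$. The key observation is that $f_N$ is a deterministic function of $N$, so everything becomes transparent once we condition on $N$.

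First I would record the standard fact that, for a uniform random permutation $\pi$ of $[n]$, the image $\pi(i)$ is uniformly distributed on $[n]$ for every fixed $i$, and that $\pi$ is drawn independently of $N$. Second, conditioning on $N = (n_1,\dots,n_k)$, the partition $f_N$ is fixed and its class $j$ contains exactly $n_j$ of the $n$ positions, i.e. $|f_N^{-1}(j)| = n_j$. Since $\pi(i)$ is uniform on $[n]$ and independent of $N$, this gives
\[
	\Pr[\, f_N(\pi(i)) = j \mid N \,] \;=\; \frac{|f_N^{-1}(j)|}{n} \;=\; \frac{n_j}{n}.
\]

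Third, I would take the expectation over $N$. Under the equal-weight multinomial, the marginal of $n_j$ is $\mathrm{Binomial}(n, 1/k)$, so $\mathbb{E}[n_j] = n/k$, and by the law of total probability
\[
	\Pr[\, f_N(\pi(i)) = j \,] \;=\; \mathbb{E}\!\left[\frac{n_j}{n}\right] \;=\; \frac{1}{n}\cdot\frac{n}{k} \;=\; \frac{1}{k}.
\]

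I do not expect a genuine obstacle; the only point needing care is bookkeeping of the two randomness sources. One must condition on $N$ \emph{before} invoking the uniformity of $\pi(i)$ — because $f_N$ depends on $N$ — and must use the independence of $\pi$ from $N$ so that this conditioning leaves the marginal distribution of $\pi(i)$ unchanged. Everything else is a one-line computation.
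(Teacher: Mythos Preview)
Your argument is correct and rests on the same two ingredients as the paper's proof: (i) for a uniform permutation, $\pi(i)$ is uniform on $[n]$, so conditioned on $N$ the hit probability is $n_j/n$; and (ii) the marginal of $n_j$ is $\mathrm{Binomial}(n,1/k)$. The difference is purely in how the averaging step is carried out. The paper first introduces an auxiliary permutation $\gamma$ to relabel class~$j$ as the prefix $\{1,\dots,n_j\}$ and then evaluates $\sum_{j=0}^n \tfrac{j}{n}\binom{n}{j}(1/k)^j(1-1/k)^{n-j}$ by a binomial-coefficient identity. You bypass both the relabelling and the explicit sum by invoking $\mathbb{E}[n_j]=n/k$ and linearity of expectation directly. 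Your route is shorter and arguably more transparent; the paper's route has the minor advantage of making the computation fully self-contained without appealing to the mean of a binomial as a black box.
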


\begin{proof}
	Due to symmetry, it suffices to consider the first partition class $j = 1$.
	Its size $n_1$ follows a binomial distribution over $n$ attempts with success probability $p = 1/k$ by definition of the multinomial distribution.
	Additionally, there exists a permutation $\gamma$ such that the composition $f_N \circ \gamma$ maps the indices $1, \ldots, n_1$ to the first partition class.
	Since $\pi$ is chosen uniformly at random, $\pi' = \pi \circ \gamma$ is equally likely. 
	Thus, it suffices to compute the total probability that $\pi'$ puts a fixed $x_i$ into the first $n_1$ ranks over all $0 \le n_1 \le n$:
	\begin{align}
		\sum_{j=0}^{n} &P[\pi'(i) \le n_1 \, |\, n_1 = j ] \cdot P[n_1 = j] 
		=  \sum_{j=0}^{n} \underbrace{\frac j n}_{= 1 - (1 - \frac j n)} \, \cdot \, \binom{n}{j} \left( \frac 1 k \right) ^ j \left( 1 - \frac 1 k \right) ^ {n -j} \\
		&= \underbrace{\sum_{j=0}^{n} 1 \, \cdot \, \binom{n}{j} \left( \frac 1 k \right) ^ j \left( 1 - \frac 1 k \right) ^ {n -j}}_{=1}
		- \sum_{j=0}^{n} \underbrace{(1 - \frac j n) \, \cdot \, \binom{n}{j}}_{
			= \begin{cases}
				\binom{n-1}{j} & \text{if } j < n \\
				0              & \text{if } j = n
		\end{cases}}	
		\left( \frac 1 k \right) ^ j \left( 1 - \frac 1 k \right) ^ {n -j}\\
		&= 1 - \underbrace{\sum_{j=0}^{n - 1} \binom{n - 1}{j} \left( \frac 1 k \right) ^ j \left( 1 - \frac 1 k \right) ^ {(n - 1) - j}}_{=1} \left( 1 - \frac 1 k \right) ^ {1} 
		= 1/k		\qedhere
	\end{align}
\end{proof}

\subsubsection{Finalizing the bucket sizes}
Let $N=(n_i)_i$ be the numbers of elements assigned to bucket $B_i$ by \RoughScatter.
Guided by \cref{lem:multinomial_shuffle}, the base case algorithm first draws a multinomial variant $N' = (n'_i)_i$ where $n'_i$ corresponds to the number of elements that will be placed into bucket~$B_i$ by \algo{FineScatter}.
Then, the final sizes $N^f = (n^f_i)_i$ are $n^f_i = n_i + n'_i$.
Since $N$ and $N'$ follow a multinomial distribution with $k$ equally weighted classes, their sum $N^f$ does too.

By construction, the expected bucket size is $n / k$.
Let $d_i = n^f_i - n/k$ denote the deviation of the size of bucket $B_i$, i.e. the number of elements it needs to gain over the initial estimation of \RoughScatter. 
Analogously to the proof of \cref{lem:r_bound}, we bound  $\max_i\left\{|d_i|\right\} = \mathcal O(\sqrt{n/k \log k})$ whp~\cite{DBLP:conf/random/RaabS98,DBLP:journals/ipl/Sanders98}.
Thus, in all likelihood, the bucket boundaries only move slightly.

Luckily, \SandShufFull is oblivious to the order of elements within a bucket prior to recursion.
Thus, it suffices to appropriately move a few items near the boundaries of the buckets using our \algo{TwoSweep} algorithm.
First, we iterate over the buckets in ascending index order.
Meanwhile, we keep a counter $C_i = \sum_{j=1}^{i-1} d_j$ that indicates how many additional items are needed left of the current bucket~$B_i$.
If bucket $B_i$ is too large by more than $C_i$ items, we swap the excess staged items into the staging area of bucket~$B_{i+1}$.
In a second sweep from $B_k$ to $B_1$, we move the remaining excess items towards smaller bucket indices.

\begin{restatable}{lemma}{lemmnf}
	\label{lem:mnf}
	Let $N^f = (n^f_i)_i$ be the final bucket sizes, denote their deviation from the mean $n/k$ as $d_i = n^f_i - n/k$, and let $D_i = \sum_{j=1}^{i} d_j$ be the inclusive prefix sum of deviations.
	Then, \algo{TwoSweep} executes a total of $M(N^f) = \sum_i |D_i|$ swaps and takes time $\Oh{k + M(N^f)}$.
\end{restatable}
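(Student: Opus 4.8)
The plan is to follow the positions of the $k-1$ internal bucket interfaces during the two passes of \algo{TwoSweep} and to charge every swap to the interface separating the two buckets whose elements it exchanges. Put $D_0 := 0$; since $\sum_i n^f_i = n = k\cdot(n/k)$ we get $D_k = \sum_i d_i = 0$, so it is enough to show that exactly $|D_i|$ swaps are associated with interface $i$ for every $i\in[1..k-1]$: the total is then $\sum_{i=1}^{k-1}|D_i| = \sum_{i=1}^{k}|D_i| = M(N^f)$. As the buckets stay contiguous and packed and $\sum_{j\le i}n^f_j = i\cdot(n/k) + D_i$, interface $i$ must end up displaced by exactly $D_i$ from its initial position $i\cdot(n/k)$, and a transfer of $t$ staged items between two neighbouring buckets shifts their common interface by $t$ using $t$ swaps.

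The core claim is that the ascending sweep moves every interface monotonically to the left and the descending sweep monotonically to the right, with no overshoot. For the ascending sweep I would prove, by induction on $i$, the invariant that after $B_1,\dots,B_i$ have been processed the block $B_1\cup\dots\cup B_i$ holds exactly $\min\{\,i\cdot(n/k),\ \sum_{j\le i}n^f_j\,\}$ items, i.e. interface $i$ has moved left by $\max(0,-D_i)$ and no further. In the inductive step $B_i$ enters holding $n/k + \max(0,-D_{i-1})$ items --- its own $n/k$ plus the surplus forwarded by $B_{i-1}$ --- keeps its target $n^f_i$ together with the $\max(0,C_i)$ items still owed further left, and ships the remainder into $B_{i+1}$; substituting $C_i=D_{i-1}$ and $n^f_i = n/k+d_i$ this remainder simplifies to $\max(0,-D_i)$, which reestablishes the invariant and equals the number of swaps associated with interface $i$ in this sweep (interface $i$ is touched only while $B_i$ is processed). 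The descending sweep is symmetric: by the same bookkeeping, when it reaches $B_{i+1}$ that bucket holds precisely $n^f_{i+1} + \max(0,D_i)$ items, hence it ships $\max(0,D_i)$ staged items leftward, restoring $B_i$ to its final size. Since for each $i$ at most one of $\max(0,-D_i)$, $\max(0,D_i)$ is nonzero, interface $i$ is associated with $\max(0,-D_i)+\max(0,D_i) = |D_i|$ swaps in total.

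For the time bound, two points need checking. First, \algo{TwoSweep} never has to relocate a \emph{placed} item to a foreign bucket: in the ascending sweep $B_i$ ships at most $\max(0,-D_i)$ items, and $(-D_i)-(-D_{i-1}) = -d_i = r_i - n'_i \le r_i$ shows this is within its $r_i+\max(0,-D_{i-1})$ staged items; in the descending sweep $B_{i+1}$ ships $\max(0,D_i)$ items while holding $n'_{i+1}+\max(0,D_i)\ge\max(0,D_i)$ staged items --- both bounds boil down to $n'_i\ge 0$. Second, a transfer of $t$ items between adjacent buckets is realised by $t$ constant-time swaps that, by exchanging each outgoing staged item with a suitable boundary item of the neighbour, also keep the placed/staged split of both buckets contiguous; the concentration of $n_i$ and $|d_i|$ around their means (cf. \cref{lem:r_bound}) guarantees these swap ranges do not overlap. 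Adding the $\Oh{1}$ bookkeeping per bucket and sweep, the running time is $\Oh{k + M(N^f)}$.

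The part I expect to be the real work is the no-overshoot invariant of the ascending sweep: one has to unwind the running counter $C_i$ and verify that the quantity shipped rightward at step $i$ is exactly $\max(0,-D_i)$ --- in particular non-negative and available as staged items --- and then establish the complementary identity that makes the descending sweep meet exactly $\max(0,D_i)$ excess at each interface. With those two telescoping identities in hand, summing over $i$ and invoking $D_k=0$, and likewise the time bound, are routine.
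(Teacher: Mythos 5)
Your proof is correct and follows essentially the same route as the paper's: charge each swap to the bucket interface it crosses, show that interface $i$ carries exactly $\max(0,-D_i)$ transfers in one sweep and $\max(0,D_i)$ in the other, and sum $|D_i|$ over interfaces using $D_k=0$. You work out the induction invariant, the telescoping identity for the shipped amounts, and the availability of staged items in more detail than the paper (which merely states the two sign cases of $D_i$), but the underlying argument is the same.
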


\begin{proof}
	\algo{TwoSweep} can exchange a staged item of bucket $B_i$ with its direct neighbors $B_{i \pm 1}$ in $\Oh{1}$ time by executing a single swap and adopting the pointers of the two involved buckets;
	exchanging an item between buckets $B_i$ and $B_j$ this way implies a chain of $|i - j|$ swaps causing $\Oh{|i-j|}$ work.
	Based on this, \algo{TwoSweep} carries out two snow-plow-like motions likely pushing intermediate items along the chain. For the remainder see \cref{app:proofs}.
	
	Observe that a positive value~$d_i$ indicates that bucket~$B_i$ needs to receive $d_i$ additional staged items from other buckets.
	Contrary, a negative value~$d_i$ means that bucket~$B_i$ has to give away $-d_i$ elements.
	The prefix sum $D_i$ has an analogous meaning but accumulated over the first $i$ buckets.
	This leads to the following cases:
	
	\begin{enumerate}
		\item 
		If $D_i$ is positive, buckets~$B_1, \ldots, B_i$ have an excess of $D_i$ items required somewhere in $B_{i+1}, \ldots, B_k$.
		These $D_i$ items will be pushed to the right during the first sweep.
		
		\item 
		If $D_i$ is negative, buckets~$B_1, \ldots, B_i$ have a demand of $|D_i|$ items met by an excess somewhere in the buckets~$B_{i+1}, \ldots, B_k$.
		Thus, $B_i$ receives $|D_i|$ items in the second sweep.
	\end{enumerate}
	
	\noindent
	In sum, bucket~$B_i$ is involved in $|D_i|$ swaps with its direct neighbors, leading to a total of $M(N^f)$ swaps and $\Oh{M(N^f) + k}$ work where the $k$ accounts for per-bucket overheads.
\end{proof}

\begin{restatable}{corollary}{twosweeptime}
	\label{cor:two-sweep-time}
	\algo{TwoSweep} takes time $\Oh{k\sqrt{nk \log k}}$ whp.
\end{restatable}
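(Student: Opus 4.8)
The plan is to combine the per-operation runtime bound from \cref{lem:mnf} with a high-probability bound on the prefix sums $D_i$, which in turn follows from the concentration of the multinomial bucket sizes $N^f$ already recorded in the discussion preceding \cref{lem:mnf}. First I would recall that, by \cref{lem:mnf}, \algo{TwoSweep} runs in time $\Oh{k + M(N^f)}$ with $M(N^f) = \sum_{i=1}^{k} |D_i|$ and $D_i = \sum_{j=1}^{i} d_j$, where $d_j = n^f_j - n/k$. Since $N^f$ is multinomial with $k$ equally likely classes and $\sum_i n^f_i = n$, each marginal $n^f_i$ is binomial over $n$ trials with success probability $1/k$; the Raab--Steger estimates~\cite{DBLP:conf/random/RaabS98}, applied exactly as in the analysis of \cref{lem:r_bound} and of \SandShufFull~\cite{DBLP:journals/ipl/Sanders98}, give $\max_i |d_i| = \Oh{\sqrt{(n/k)\log k}}$ whp, where the assumption $n \gg k(\log k)^3$ (hence $n/k = \omega(\log k)$) places us in the regime in which this bound holds for both the upper and the lower deviation.

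Next I would propagate this through the prefix sums by the triangle inequality: on the high-probability event above, $|D_i| \le \sum_{j=1}^{i} |d_j| \le k \cdot \max_j |d_j| = \Oh{k \sqrt{(n/k)\log k}} = \Oh{\sqrt{n k \log k}}$ simultaneously for all $i \in [k]$. Summing over the $k$ indices then yields $M(N^f) = \sum_{i=1}^{k} |D_i| \le k \cdot \max_i |D_i| = \Oh{k \sqrt{n k \log k}}$ whp. Finally, since $n \gg k(\log k)^3 \ge 1$ we have $\sqrt{n k \log k} \ge 1$, so the additive $\Oh{k}$ term in \cref{lem:mnf} is absorbed, and the total running time is $\Oh{k + k\sqrt{n k \log k}} = \Oh{k \sqrt{n k \log k}}$ whp, as claimed.

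I do not expect a serious obstacle: the argument is a routine composition of \cref{lem:mnf} with the concentration of $N^f$ that the paper has already invoked. The only point requiring a little care is that the $\Oh{\sqrt{(n/k)\log k}}$ deviation estimate must hold for all $k$ buckets and for both tails simultaneously, and with failure probability at most $1/n$ as demanded by our notion of high probability; this is handled by a union bound over the $2k$ events together with a sufficiently large hidden constant in the Raab--Steger estimate, which is admissible in the parameter regime $n = \Omega(k^3 \log k)$ assumed throughout \cref{subsec:ipscs}.
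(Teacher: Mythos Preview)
Your argument is correct and lands on the same bound, but the route differs from the paper's. You control $|D_i|$ via the triangle inequality and a per-bucket deviation bound $\max_j |d_j| = \Oh{\sqrt{(n/k)\log k}}$ taken directly from Raab--Steger, giving $|D_i| \le k\max_j|d_j| = \Oh{\sqrt{nk\log k}}$. The paper instead recycles \cref{lem:r_bound}: since only the $R$ items left staged after \RoughScatter can move bucket boundaries, one has $\sum_i |d_i| \le 2R$; together with $\sum_i d_i = 0$ this yields $\max_i |D_i| \le R \le \sqrt{2nk\log k}$, and summing over $k$ buckets gives the same $\Oh{k\sqrt{nk\log k}}$. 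Both arguments are elementary; the paper's version has the mild advantage of reusing an already-established lemma rather than re-invoking the balls-into-bins tail bound, while yours is a clean self-contained application of the per-bucket concentration that the paper itself records just before \cref{lem:mnf}.
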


\begin{proof}
	We prove the claim based on \cref{lem:mnf} by establishing $\sum_i |D_i| = \Oh{k \sqrt{n k \log k}}$ (whp) where $D_i = \sum_{j=1}^i d_i$ is the prefix sum over the bucket size deviations from the mean.
	Observe that by construction, only elements that remain staged after the execution of \RoughScatter can contribute and therefore $\sum_i |d_i| \le 2 R$ where $R \le \sqrt{2 nk \log k}$ (whp) by \cref{lem:r_bound}.
	Additionally, since the deviations balance over all buckets we have $\sum_i d_i = 0$.
	Thus, we trivially have that $\max_i |D_i| \le R$.
	
	We assume a worst case deviation where the first bucket needs to gain all $R$ elements from the last bucket (or vice versa).
	While this is rather pessimistic (recall $\max_i\left\{|d_i|\right\} = \Oh{\sqrt{n/k \log k}}$  whp), it suffices to show the bound.
	In this instance, we have $|D_i| \le R$ for all $i$ and $\sum_{i=1}^k |D_i| \le \sum_{i=1}^k R \le k R = \Oh{k \sqrt{n k \log k}}$ (whp). 
\end{proof}

\subsubsection{Assigning the remaining staged elements} 
At this point in the execution, all buckets have reached their final sizes, but each bucket~$B_i$ still has $n'_i$ staged items with $\sum_i n'_i = R$.
Rather than sampling weighted by $N' = (n'_i)_i$, we apply \cref{lem:multinomial_shuffle} and instead randomly shuffle all staged items.
This can be done by compacting the staged item into $X[1..R]$ and shuffling $X$.
To this end, we swap the staged items with the items originally stored in $X$; after shuffling, we apply the same swap sequence in reverse to restore the original items and put the staged items into a now random permutation.

\subsection{Putting it all together}
\IPSFull (\IPS) is the algorithm executed on each recursion layer of \IPDS. It runs \RoughScatter and \algo{FineScatter} in sequence to randomly assign $n$ elements to $k$ buckets.
The input is rearranged such that each bucket corresponds to a contiguous memory region.

\goodbreak

\begin{lemma}\label{lem:runtime-without-recursion}
	For $n \gg k \log^3k$ and $k^3 \log k = \Oh{n}$, \IPS assigns $n$  items in time $\Oh{n}$ whp.
\end{lemma}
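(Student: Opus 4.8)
The plan is to split the running time of \IPS along its two phases, \RoughScatter and \algo{FineScatter}, bound each separately, and then collect the handful of whp statements by a union bound. Throughout, the two preconditions $n \gg k\log^3 k$ and $k^3\log k = \Oh n$ are carried along; the first is what makes the concentration bounds of \cref{lem:r_bound} applicable, and the second is calibrated exactly to swallow the dominant term.

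First I would dispatch \RoughScatter. Each iteration of its loop performs only a constant amount of work — sampling one partner index from $[1..k]$, at most one swap $X[s_1] \leftrightarrow X[s_j]$, and one pointer increment — and it finalizes the placement of exactly one element. Since at most $n$ elements ever get placed, \RoughScatter executes $\Oh n$ iterations and thus runs in $\Oh n$ time, deterministically. By \cref{lem:r_bound} it in fact places $n - R$ elements with $R \le \sqrt{2nk\log k}$ whp, and $R$ is precisely what the second phase must mop up.

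Next I would bound \algo{FineScatter}, which naturally decomposes into three sub-steps. (i) Drawing the multinomial vector $N'=(n'_i)_i$ costs expected $\Oh k$ time by chaining $k$ binomial samples of expected constant cost each (see \cref{sec:preliminaries}); since $k \le k^3\log k = \Oh n$, this is $\Oh n$. (ii) Running \algo{TwoSweep} to shift the bucket boundaries to the final sizes $N^f$ takes time $\Oh{k + M(N^f)}$ by \cref{lem:mnf}, and $M(N^f) = \Oh{k\sqrt{nk\log k}}$ whp by \cref{cor:two-sweep-time}. (iii) Assigning the leftover staged elements — compacting the $R$ of them into $X[1..R]$, shuffling that prefix with \FY, and replaying the compaction swaps in reverse — costs $\Oh R = \Oh{\sqrt{nk\log k}}$ whp, again by \cref{lem:r_bound}.

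Adding everything up, \IPS runs in $\Oh n + \Oh k + \Oh{k\sqrt{nk\log k}} + \Oh{\sqrt{nk\log k}}$ time whp. The last two terms are dominated by $\Oh{k\sqrt{nk\log k}}$ (as $k \ge 1$), and this is exactly where the hypothesis bites: $k\sqrt{nk\log k} = \Oh n$ is equivalent to $k^3\log k = \Oh n$, which is assumed, so the entire expression collapses to $\Oh n$ whp. The main obstacle — really the only delicate point — is that the \algo{TwoSweep} term is the genuine bottleneck and its bound is indirect: it rests on the balls-into-bins concentration of \cref{lem:r_bound} propagated through \cref{cor:two-sweep-time}, and it is what forces the stated precondition on $k$. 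A secondary point to keep honest is that the whp guarantees of \cref{lem:r_bound,cor:two-sweep-time} (and, if one wants a whp rather than an expected bound for step (i), a Chernoff bound on the total number of rejections) must hold simultaneously; since there are only constantly many such events, a union bound keeps the total failure probability at $\Oh{1/n}$.
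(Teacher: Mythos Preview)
Your proposal is correct and matches the paper's proof essentially step for step: the paper also enumerates the four tasks (\RoughScatter, sampling the multinomial, \algo{TwoSweep}, and shuffling the $R$ leftover items), bounds them by $\Oh{n}$, $\Oh{k}$, $\Oh{k\sqrt{nk\log k}}$, and $\Oh{R}$ respectively, and absorbs each into $\Oh{n}$ via the hypothesis $k^3\log k = \Oh{n}$. Your write-up is in fact a bit more careful than the paper's in two places: you note that the $\Oh{n}$ bound on \RoughScatter is deterministic (not merely whp), and you make explicit both the equivalence $k\sqrt{nk\log k} = \Oh{n} \iff k^3\log k = \Oh{n}$ and the union bound over the constantly many whp events.
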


\begin{proof}
	We sum up the four tasks carried out:
	\begin{enumerate}
		\item
		 	\RoughScatter first partitions the input into buckets in time $\Oh{k}$ and then randomly assigns $n - R = \Oh{n}$ elements whp.
			Assuming that obtaining a word of randomness takes constant time, this translates into a time complexity of $\Oh{k + n} = \Oh{n}$.
			
		\item 
			Sampling a $k$-dimensional multinomial random variate takes time $\Oh{k}$ whp.
			
		\item 
			Running \algo{TwoSweep} to adjust the boundary size takes time $\Oh{k \sqrt{n k \log k}} = \Oh{n}$ whp.
			
		\item 
			Shuffling the staged items with \FYFull takes time $\Oh{R} = \Oh{n}$ time.\footnote{Based on \cref{thm:ipds-runtime}, we are also free to recurse with \IPDS instead of using \FYFull.}\qedhere
	\end{enumerate}
\end{proof}

\goodbreak

\noindent 
\IPDSFull consists of recursive applications of \IPS.
We stop the recursion on a subproblem as soon as it reaches the base case size of $N_0 = \Oh{1}$ at which point it is finalized using \FYFull.

\begin{theorem}\label{thm:ipds-runtime}
	With high probability \IPDSFull takes time $\Oh{n \log_k (n / N_0)}$ and $\Oh{k \log_k (n / N_0)}$ additional words of storage where $N_0 = \Omega(k^3)$ is the base case size.
\end{theorem}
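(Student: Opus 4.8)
The plan is to analyse the recursion tree $T$ of \IPDS: its root is the size-$n$ input, every internal node is a subproblem of size $m>N_0$ that one call of \IPS splits into $k$ children whose sizes sum to $m$, and every leaf is a subproblem of size at most $N_0$ finished by \FYFull in time $\Oh{N_0}$. Because the recursion in \cref{alg:ipds} descends into the $k$ buckets sequentially (\texttt{foreach}), only a single root-to-leaf path is ever live on the call stack, and each \IPS frame uses $\Oh{k}$ auxiliary words (the $3k$ bucket pointers plus the $k$-entry array of final sizes). Hence, once the height of $T$ is bounded by $\Oh{\log_k(n/N_0)}$, the storage bound is immediate and the running time reduces to a level-by-level sum.

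For the height I would track a single input element $e$ in the idealised process that keeps throwing the current sub-bucket's elements into $k$ fresh uniform sub-buckets forever (ignoring the $N_0$ cutoff), and let $S_\ell$ be the size of the sub-bucket containing $e$ after $\ell$ rounds, so $S_0=n$. Since the $S_\ell$ elements each pick a uniform bucket, $e$'s next sub-bucket consists of $e$ together with each of the other $S_\ell-1$ elements independently with probability $1/k$; this is an \emph{exact} identity requiring no concentration, and it yields $E[S_{\ell+1}-1 \mid S_\ell]=(S_\ell-1)/k$, hence $E[S_\ell-1]=(n-1)/k^\ell$. Because the real \IPDS reaches recursion depth $\ell+1$ only if, in this process, some element still sits in a sub-bucket of size $>N_0$ after $\ell$ rounds, Markov's inequality and a union bound over the $n$ elements give $P[\mathrm{height}(T)>\ell]\le n\,E[S_\ell-1]/N_0=\Oh{n^2/(k^\ell N_0)}$. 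Taking $\ell=\log_k(n/N_0)+2\log_k n$ makes this $\Oh{1/n}$, and since $\log_k N_0=\Oh{1}$ we get $\ell=\Oh{\log_k(n/N_0)}$.

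Conditioned on $\mathrm{height}(T)=L=\Oh{\log_k(n/N_0)}$, the work is bounded level by level. By \cref{lem:runtime-without-recursion} (whose hypotheses $m\gg k\log^3k$ and $k^3\log k=\Oh{m}$ hold at every internal node since $m>N_0$), the non-recursive work at an internal node $v$ --- \RoughScatter, the multinomial draw, \algo{TwoSweep}, and the \FYFull pass over the staged items --- is $\Oh{|v|}$ whp, and a leaf of size $m$ likewise costs $\Oh{m}$. As the nodes at a fixed depth are pairwise disjoint subarrays, their sizes sum to at most $n$, so each of the $L+1$ levels contributes $\Oh{n}$ work and the total is $\Oh{nL}=\Oh{n\log_k(n/N_0)}$ whp; the per-node $\Oh{k}$ overhead is absorbed because $|v|>N_0\ge k$. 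Multiplying the $\Oh{k}$ words per frame by the $\Oh{\log_k(n/N_0)}$ stack depth gives the storage bound, and a concluding union bound ties together the height event and the per-node invocations of \cref{lem:runtime-without-recursion} for overall error $\Oh{1/n}$.

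The step I expect to be delicate is exactly this union bound over the whole recursion: the guarantees of \cref{lem:r_bound,lem:runtime-without-recursion} are ``whp \emph{in the size of that call}'', which is far too weak for the many near-base-case subproblems (of which there can be $\Theta(n\log_k(n/N_0))$), so one cannot naively multiply $1/m$-error events. The Markov argument for the height already sidesteps this, since it uses only $E[S_\ell]$ and is valid down to size $N_0$. For the work one splits the internal nodes into ``large'' ones ($m=\Omega(k^2\log n)$), for which the underlying Chernoff bounds --- the load concentration behind \cref{lem:r_bound} and the rejection-sampling tail for the multinomial draw --- boost to failure probability $\Oh{n^{-3}}$ so that a union over the $\Oh{n\log_k n}\le n^2$ nodes still leaves error $\Oh{1/n}$, and ``small'' ones ($m=\Oh{k^2\log n}$), for which the per-call cost is $\Oh{\mathrm{poly}(k)\cdot m}$ with probability $1-o(1/n)$ anyway (only the binomial rejection loops are random, and they are geometrically tame), contributing at most $\Oh{\mathrm{poly}(k)\cdot n}$ per level; consistently with the convention that $k$ is a small tuning parameter, these $\mathrm{poly}(k)$ factors are suppressed in the $\Oh{\cdot}$. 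Everything else --- summing the geometric tail of per-level overheads, and checking that levels beyond depth $\log_k(n/N_0)$ number only $\Oh{\log_k(n/N_0)}$ --- is routine bookkeeping.
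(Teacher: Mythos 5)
Your proof follows the same skeleton as the paper's: recursion depth $\Oh{\log_k(n/N_0)}$ whp, times $\Oh{n}$ work per level via \cref{lem:runtime-without-recursion}, plus $\Oh{k}$ words per stack frame under a depth-first traversal. The paper's own proof is only a four-sentence sketch of exactly this, so the substance you add lies in how you justify the two steps it asserts. Your height bound is genuinely different in technique: the paper implicitly relies on per-level concentration (``subproblems of size $\Theta(n/k)$ whp''), whereas you track a single element and use only $E[S_\ell-1]=(n-1)/k^\ell$ plus Markov and a union bound over elements. This is more robust, since it needs no concentration at all and remains valid for the many small subproblems near the base case where Chernoff-type bounds in the subproblem size are useless; its only soft spot is the step $\log_k N_0=\Oh{1}$, which requires $N_0=k^{\Oh{1}}$ rather than merely $N_0=\Omega(k^3)$ as stated, but that matches the intended parameter regime. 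Your second addition --- noting that the per-call ``whp'' guarantees of \cref{lem:r_bound} and \cref{lem:runtime-without-recursion} cannot be naively union-bounded over $\Theta(n\log_k n)$ calls, and repairing this by splitting nodes into large ones (boosted tails) and small ones (worst-case $\mathrm{poly}(k)\cdot m$ cost) --- identifies a real gap that the paper's sketch glosses over; the repair is standard and sound, though the boosting of the balls-into-bins tail to $\Oh{n^{-3}}$ is asserted rather than carried out. In short: correct, same decomposition and same key lemma, but with a more elementary height argument and an honest treatment of the union bound that the paper omits.
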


\begin{proof}
	\IPDS splits an input of length $n$ into $k$ independent subproblems of size $\Theta(n/k)$ whp.
	It then calls itself recursively until the base case size of $N_0$ is reached.
	Whp, this involves $\Oh{\log_k (n / N_0)}$ recursion layers, each taking time $\Oh{n}$ and requiring $\Oh{k}$ words of memory for a depth-first traversal.
	The base case \FY uses $\Oh{1}$ words of memory and takes time $\Oh{n'}$ for a subproblem of size $n'$ and in total $\Oh{n}$ for all subproblems.
\end{proof}

\section{Parallel algorithms}\label{sec:parallel}
In this section, we introduce \PIPDSFull (\algo{\textbf{P}IpScShuf}), a parallel variant of \IPDS.
It is obvious that after running \IPS (\ie a single recursion level of \IPDS) we can process the $k$ independent subproblems pleasingly in parallel --- this is one of the core insights of the original  \SandShufFull~\cite{DBLP:journals/ipl/Sanders98}.
Unfortunately, in our case, parallelizing the subproblems alone leads to a linear parallel depth, since the first \IPS execution requires $\Omega(n)$ sequential work.
Therefore, we also have to parallelize \IPS itself.
We focus on the parallelization of \RoughScatter which, in practice, accounts for the vast majority of work.

\subsection{Parallelizing RoughScatter}
At heart, the parallel \ParRoughScatter runs the sequential~\RoughScatter concurrently on independent subproblems.
To this end, we exploit that \RoughScatter allows arbitrary gaps \emph{between} buckets.
Secondly, we can freely pause and resume after each assignment without additional overhead since the algorithm's state is fully captured by the buckets' pointer triples and the partition of the placed elements.

Analogously to \cref{subsec:rough-shuffle}, we first split the input into~$k$ buckets of roughly equal size.
In order to fork, we further split each bucket into two, and assign either half to one subtask.
Then each subtask either recursively continues splitting, or, if the subproblem is sufficiently small, runs the sequential \RoughScatter.
After both subtasks join, we merge the two halves of each bucket.
This involves only operations on the buckets' pointers and swapping the staged items of the first half to the second half.
Additionally observe that the first subtask ends if there exists a filled bucket $B^{(1)}_i$, and analogously $B^{(2)}_j$ for the second subtask.
Only with probability $1/k$, we have $i = j$, and thus, the merged bucket $B_j$ is full.
Otherwise, all merged buckets contain at least one staged item and we continue executing \RoughScatter.

\begin{observation}\label{obs:par-r}
	Since \ParRoughScatter applies \RoughScatter after each join, the number~$R$ of remaining staged items according to \cref{lem:r_bound} also holds for \ParRoughScatter.
\end{observation}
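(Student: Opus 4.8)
The plan is to mirror the analysis behind \cref{lem:r_bound}: reduce \ParRoughScatter to a balls-into-bins game and show that, before it stops, it throws at least as many balls as the sequential \RoughScatter would (on the same draws), so the number $R_{\mathrm{par}}$ of elements it leaves staged is, whp, no larger than the bound $\sqrt{2nk\log k}$ of \cref{lem:r_bound}.

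First I would observe that every placement carried out anywhere in \ParRoughScatter --- inside a leaf call of the sequential \RoughScatter, or inside the \RoughScatter that a node runs after merging its two children --- draws its partner bucket uniformly and independently from $[1..k]$, and that the $k$ bucket \emph{indices} stay globally consistent: splitting a bucket into halves for a fork and re-concatenating the halves on a join only relabels positions and relocates \emph{staged} items, which is inconsequential as argued in \cref{subsec:rough-shuffle}. Hence, listing the partner-bucket draws in their actual execution order yields a sequence $c_1, c_2, \dots$ of i.i.d.\ uniform variates over $[1..k]$: each $c_i$ is drawn fresh, independently of $c_1, \dots, c_{i-1}$ and of the scheduler's decisions so far, so no (possibly randomised, work-stealing) schedule can bias it. This is exactly the i.i.d.\ sequence feeding the balls-into-bins experiment in the proof of \cref{lem:r_bound}; writing $n'$ for the first index at which its running maximum bin load reaches $n/k$, that proof (solving $M(n', k) = n/k$) gives $n' \ge n - \sqrt{2nk\log k}$ whp.

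The second, and main, step is to show that \ParRoughScatter performs a total of $T \ge n'$ placements, so that it cannot stop before the sequential algorithm would. The structural fact I would use is that a node stops exactly when one of its \emph{current} buckets becomes full, and that a node handling an $m$-element subproblem owns $k$ buckets of capacity $m/k$ (a fork halves capacities; a join adds them back). Therefore a child filling one of its capacity-$m/(2k)$ buckets, with $m/(2k) < n/k$, does \emph{not} terminate the algorithm --- it merely pauses that subproblem until its join, after which the parent resumes placing (unless the merged bucket is already full, which happens only at load exactly $n/k$). Consequently \ParRoughScatter stops precisely at the first moment some \emph{root-level} bucket, of capacity $n/k$, is full; since a root bucket's load is just the number of the $c_i$ seen so far that equal its index, this moment is at least $n'$. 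Hence $R_{\mathrm{par}} = n - T \le n - n' \le \sqrt{2nk\log k}$ whp, which is the bound of \cref{lem:r_bound}. When a join itself produces a full bucket, $T$ may exceed $n'$, but that only decreases $R_{\mathrm{par}}$; in particular \cref{obs:par-r} needs only this whp upper bound, equivalently the stochastic domination $R_{\mathrm{par}} \le R$, and not equality in distribution.

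The hard part is the concurrency bookkeeping in the two places just flagged: arguing that the execution-order sequence $(c_i)$ really is i.i.d.\ despite $T$ being a data-dependent stopping time and the order being chosen by the scheduler (it is, because each draw is independent of its entire past, so adaptivity in the schedule cannot help), and arguing that the invariant that a node halts exactly when one of its buckets fills up is watertight --- in particular that a sequential \RoughScatter subcall never gets stuck with staged items but no full bucket, which holds because once all $m$ elements of a subproblem are placed, each of its $k$ capacity-$m/k$ buckets is exactly full by pigeonhole. Everything else is routine.
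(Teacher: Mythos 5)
Your argument is correct and formalizes exactly the reasoning the paper leaves implicit: the observation is stated without any proof, its justification being precisely what you establish --- the partner-bucket draws remain i.i.d.\ uniform over $[1..k]$ under any schedule, forks and joins only pause and resume subproblems, so the process terminates exactly when a root-level bucket first reaches load $n/k$ and the balls-into-bins stopping-time analysis of \cref{lem:r_bound} applies verbatim. One minor slip: an \emph{internal} node's merged bucket fills at load $m/k$ for its own subproblem size $m$, not at $n/k$; this does not affect your conclusion, since such a fill merely pauses that subtree until its own join, and only a root-level fill (at load $n/k$) ends the algorithm.
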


\begin{lemma}\label{lem:par-rough-scatter}
	For $n \gg k \log^3 k$ and $k^2 = \Oh{n}$, whp \ParRoughScatter has $\Oh{\sqrt{n k \log k}}$ parallel depth and needs $\Oh{n}$ work.
\end{lemma}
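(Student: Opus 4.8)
The plan is to reason about the recursion tree that \ParRoughScatter induces: a binary fork-tree of depth $\Oh{\log(n/s)}$ whose $\Theta(n/s)$ leaves run the sequential \RoughScatter on subproblems of a base-case size $s$ that I will fix at the end. Along the way only three kinds of operations occur. First, \emph{placements}, each of which finalizes one element's bucket in $\Oh{1}$ time; since every element is placed at most once, there are at most $n$ in total. Second, \emph{fork/partition} work: splitting the $k$ buckets at an internal node costs $\Oh{k}$, and there are $\Theta(n/s)$ internal nodes. Third, \emph{merge} work at each internal node: $\Oh{k}$ pointer updates plus one swap per staged element moved from a first half to the matching second half.

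For the \textbf{work} bound I would handle the merge contribution level by level. A subtree rooted at depth $\ell$ is itself an instance of \ParRoughScatter on $\Theta(n/2^\ell)$ elements, so by \cref{obs:par-r} together with \cref{lem:r_bound} it leaves only $\Oh{\sqrt{(n/2^\ell)k\log k}}$ staged elements whp (valid as long as $n/2^\ell \gg k\log^3 k$, which holds for every internal node once $s = \omega(k\log^3 k)$). Hence the merge at a depth-$\ell$ node moves $\Oh{\sqrt{(n/2^\ell)k\log k}}$ elements, and the $2^\ell$ nodes at that level contribute $\Oh{\sqrt{2^\ell\, n k\log k}}$; this is geometric in $\ell$ and dominated by the deepest internal level, giving a total of $\Oh{n\sqrt{k\log k/s}}$. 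Adding the placement and fork costs yields work $\Oh{n + nk/s + n\sqrt{k\log k/s}}$, and choosing $s$ with $s = \omega(k\log^3 k)$ and $s = o(\sqrt{nk\log k})$ --- a range that is nonempty under the hypotheses $n \gg k\log^3 k$ and $k^2 = \Oh{n}$ --- makes all terms $\Oh{n}$.

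For the \textbf{depth} I would trace a single root-to-leaf path and back up. Descending costs $\Oh{k}$ per level, \ie $\Oh{k\log(n/s)} = \Oh{k\log n}$; the leaf runs \RoughScatter on $s$ elements in time $\Oh{s}$. Ascending, the post-join work at a depth-$\ell$ node is a merge ($\Oh{k + \sqrt{(n/2^\ell)k\log k}}$, as above) followed by a resumed \RoughScatter that can place \emph{at most} the staged elements present there, \ie $\Oh{\sqrt{(n/2^\ell)k\log k}}$ further operations before some bucket fills. Summing the square-root terms over $\ell = 0,\dots,\log n$ is geometric and now dominated by the root, giving $\Oh{\sqrt{nk\log k}}$, while the $\Oh{k}$ terms sum to $\Oh{k\log n}$. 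The depth is therefore $\Oh{\sqrt{nk\log k} + s + k\log n}$; under $k^2 = \Oh{n}$ one checks $k\log n = \Oh{\sqrt{nk\log k}}$, and $s = \Oh{\sqrt{nk\log k}}$ by our choice, so the depth is $\Oh{\sqrt{nk\log k}}$. The ``whp'' qualifier needs \cref{lem:r_bound} to hold simultaneously for all $\Theta(n/s)$ subtrees, which follows from a union bound because the underlying balls-into-bins concentration \cite{DBLP:conf/random/RaabS98} is strong enough to absorb a polynomial factor. I expect the delicate points to be exactly this bookkeeping --- arguing that a resumed \RoughScatter never does more work than its current staged count, checking that each level-wise geometric sum is governed by the extreme level, and pinning down an $s$ that keeps the fork/merge overhead $\Oh{n}$ while keeping the leaf cost below $\sqrt{nk\log k}$.
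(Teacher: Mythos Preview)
Your argument is correct and follows the same route as the paper: bound the staged elements at every node of the fork--join tree via \cref{obs:par-r} and \cref{lem:r_bound}, then sum the resulting $\Oh{\sqrt{(n/2^\ell)k\log k}}$ contributions along a root--leaf path for depth and across levels for work. The paper packages this more tersely as the recursions $D(n)=D(n/2)+\Oh{k+\sqrt{nk\log k}}$ and $W(n)=2W(n/2)+\Oh{k+\sqrt{nk\log k}}$ with base case $N_0=k^2$ and appeals to the Master Theorem, whereas you unroll the sums explicitly, keep the base-case size a free parameter, and handle the union bound the paper leaves implicit---but the substance is the same.
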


\begin{proof}
	Splitting $k$ buckets into $2k$ takes $\Oh{k}$ time.
	By \cref{obs:par-r}, \cref{lem:r_bound} bounds the number of staged items received from both subtasks to $\Oh{\sqrt{nk\log k}}$.
	This bounds from above the time required to swap elements during merging, as well as  the time to run \RoughScatter on the remaining staged elements after merging.
	To meet the prerequisites of \cref{lem:r_bound}, we choose a base case size of $N_0 = k^2$ and process smaller subproblem sequentially in time $\Oh{N_0}$.
	This leads to the following bound on the parallel depth~$D(n)$:
	\begin{align}
		D(n) &= \begin{cases}
					D(n/2) + \Oh{k + \sqrt{n k \log k}} & \text{if } n \ge N_0 \\
					\Oh{N_0} & \text{if } n < N_0
				\end{cases}\\
			&= \Oh{N_0 + \log({n}/{N_0}) k + \sqrt{n k \log k}}
			= \Oh{\sqrt{n k \log k}}
	\end{align}

	\noindent
	Analogously, we bound the work using the Master Theorem~\cite{DBLP:journals/sigact/BentleyHS80} for the following recursion:
	\begin{align}
	W(n) &= \begin{cases}
		2 W(n/2) + \Oh{k + \sqrt{n k \log k}} & \text{if } n \ge N_0 \\
		\Oh{N_0} & \text{if } n < N_0
	\end{cases} 
	= \Oh{n} \qedhere
\end{align}
\end{proof}

\subsection{Parallelizing FineScatter}
As we discuss in \cref{subsec:exp-parallel}, in practice, it is not necessary to parallelize \algo{FineScatter} due to its negligible impact on the total runtime.
Thus, in the following, we sketch just enough adoptions to reduce the parallel depth of \algo{FineScatter} to that of \ParRoughScatter.
By \cref{obs:par-r}, the analysis of \algo{FineScatter} in \cref{subsec:fine-scatter} remains valid after the execution of \ParRoughScatter.
By comparing with \cref{lem:par-rough-scatter}, we find that the parallel depth \ParRoughScatter dominates all sequential operations but \algo{TwoSweep}.

Recall that \algo{TwoSweep} shifts the boundaries of buckets to match the final bucket sizes in time $\Oh{k \sqrt{nk\log k}}$ whp.
Thus, we need to shave off only a factor of $\Theta(k)$ which is straightforward using standard parallelization techniques based on the following observation:
The prefix sum~$D_i$ defined in \cref{lem:mnf} can be interpreted as the number of elements that the end of bucket~$B_i$ needs to be shifted.
Thus, after computing~$(D_i)_i$ and placing one worker per bucket, each worker can shift elements in the appropriate direction.
To shift items between distant buckets, we run $\Oh{k}$ rounds.
The time per round is dominated by the largest swap of items over any bucket boundary which, in turn, is upper bounded by the maximal deviation $\Oh{\sqrt{n/k \log k}}$ whp in the first round.
Thus, \algo{TwoSweep} can be na\"ively parallelized with a parallel depth of $\Oh{\sqrt{nk\log k}}$ matching that of \ParRoughScatter.
This results in a trivial upper bound of work of $\Oh{k \sqrt{nk\log k}}$ matching the overestimation of \cref{cor:two-sweep-time} used for the sequential \algo{TwoSweep}.

\begin{theorem}
	\PIPDS has (whp) parallel depth $\Oh{\sqrt{n k / \log k} \log(n)}$, uses $\Oh{n \log_k (n)}$ work and $\Oh{k [\log_k(n) + \nproc]}$ words of memory where $\nproc$ is the number of parallel subtasks.
\end{theorem}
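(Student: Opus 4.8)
The plan is to obtain the theorem by aggregating the per-node guarantees of Sections~\ref{sec:sequential} and~\ref{sec:parallel} over the recursion tree of \PIPDS. Recall that \PIPDS is the $k$-ary recursion of \IPDS in which every node runs the parallel \IPS --- that is, \ParRoughScatter followed by the parallelized \algo{FineScatter} --- and then recurses on the $k$ resulting buckets, which are processed pleasingly in parallel. By the same argument as in \cref{thm:ipds-runtime}, whp each node of size $m$ splits into $k$ subproblems of size $\Theta(m/k)$, so whp the recursion tree has depth $\Oh{\log_k n}$ and a subproblem on level $j$ has size $\Theta(n/k^j)$. Everything then reduces to plugging the per-node bounds into this tree and summing.

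For the parallel depth I would first bound one parallel \IPS on a subproblem of size $m$. \cref{lem:par-rough-scatter} gives \ParRoughScatter depth $\Oh{\sqrt{mk\log k}}$ whp; the parallelized \algo{TwoSweep} contributes $\Oh{\sqrt{mk\log k}}$, the $k$-dimensional multinomial sample $\Oh{k}$, and the in-place remainder shuffle $\Oh{R}=\Oh{\sqrt{mk\log k}}$, so one parallel \IPS has depth $\Oh{\sqrt{mk\log k}}$ whp whenever the size prerequisites of \cref{lem:par-rough-scatter} hold; smaller subproblems are finished by a sequential routine in time $\Oh{k^2}$, which is absorbed because the standing assumption $k^3\log k=\Oh{n}$ forces $k^2=\Oh{\sqrt{nk\log k}}$. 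Since the level-$j$ subproblems run simultaneously, the depth of \PIPDS is $\sum_{j=0}^{\Oh{\log_k n}}\Oh{\sqrt{(n/k^j)\,k\log k}}$. Bounding each of the $\Oh{\log_k n}$ terms by the top-level value $\Oh{\sqrt{nk\log k}}$ and using the identity $\sqrt{nk\log k}\cdot\log_k n=\sqrt{nk/\log k}\,\log n$ yields the claimed depth $\Oh{\sqrt{nk/\log k}\,\log n}$ whp. (Summing the geometric series instead even gives the sharper $\Oh{\sqrt{nk\log k}}$, but the stated bound already suffices.)

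The work follows by the Master Theorem, exactly as in \cref{lem:par-rough-scatter}: each parallel \IPS on size $m$ does $\Oh{m}$ work whp --- combining \cref{lem:runtime-without-recursion}, the $\Oh{m}$ work of \ParRoughScatter from \cref{lem:par-rough-scatter}, and the $\Oh{k\sqrt{mk\log k}}=\Oh{m}$ overestimate for parallel \algo{TwoSweep} (cf.\ \cref{cor:two-sweep-time}) --- so $W(n)=k\,W(\Theta(n/k))+\Oh{n}=\Oh{n\log_k n}$ whp, the same total as sequential \IPDS, as expected since parallelization only reorders instructions. For memory, a live \IPS holds $\Oh{k}$ words for its bucket-pointer triples; \ParRoughScatter's fork-join keeps $\Oh{\nproc}$ such subtasks alive, each carrying $\Oh{k}$ (half-)bucket pointers, for $\Oh{k\nproc}$; and, exactly as in \cref{thm:ipds-runtime}, the $\Oh{\log_k n}$-deep recursion is traversed depth-first for the parts a worker executes sequentially, costing $\Oh{k}$ per level. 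All remaining scratch space --- PRNG state, the multinomial sample, and the compaction of the $\Oh{R}$ staged items, which \algo{FineScatter} performs in place inside $X$ --- is $\Oh{k}$ or smaller and reused, giving $\Oh{k(\log_k n+\nproc)}$ words overall.

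The step I expect to be the main obstacle is making the ``whp'' rigorous across the whole recursion tree: one must guarantee, with total error probability still at most $1/n$, that \emph{every} node down to the base case splits into $\Theta(m/k)$-sized children and that \cref{lem:r_bound} and \cref{lem:par-rough-scatter} apply at each node. This requires a union bound over the $\Oh{n}$ nodes, and is precisely why the parallel recursion is cut off at a base case of size $\Omega(k^3)$ with $\Omega(k\log^3 k)$: below that size the balls-into-bins tail bounds of~\cite{DBLP:conf/random/RaabS98} no longer deliver error probability $1/n$, so such subproblems are finished by a sequential routine whose deterministic worst-case cost is already dominated by the bounds above. A secondary subtlety is the memory accounting --- verifying that the $\Oh{\nproc}$-wide horizontal parallelism of \ParRoughScatter and the $\Oh{\log_k n}$-deep vertical recursion contribute additively rather than multiplicatively --- which hinges on the in-place remainder handling of \algo{FineScatter} and on the work-stealing scheduler's space guarantee; the remaining steps are routine substitutions into the recurrences above.
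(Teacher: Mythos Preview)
Your proposal is correct and follows essentially the same route as the paper's one-sentence proof, which simply invokes \cref{thm:ipds-runtime} with the per-level cost of \IPS replaced by that of \cref{lem:par-rough-scatter} and adds $\Oh{k}$ words per parallel subtask for the memory bound. Your explicit identity $\sqrt{nk\log k}\cdot\log_k n=\sqrt{nk/\log k}\,\log n$ is precisely the ``symbolic simplification'' the paper alludes to, and your side remark that the geometric sum over levels actually yields the sharper $\Oh{\sqrt{nk\log k}}$ is a valid bonus the paper does not state.
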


\begin{proof}
	The proof is analogous to the proof of~\cref{thm:ipds-runtime} by replacing the complexity measures of \IPS with \cref{lem:par-rough-scatter} followed by symbolic simplifications.
	The memory bound additionally accounts for $k$ bucket pointer triples per subtask.
\end{proof}

\section{Implementation}\label{sec:implementation}
Our implementations use \texttt{Rust}, a programming language with strong memory safety and parallelism guarantees.
While the code repository contains a number of prototypes, we consider the publicly exposed algorithms, such as \IPDS (\texttt{seq\_shuffle}) and \PIPDS (\texttt{par\_shuffle}), ready to be used in other projects.
To monitor the code quality, we rely on the strong static analysis tools and dynamic checks available in the Rust ecosystem.
We also use more than 80 tests that include statistical tests of the uniformity of the produced permutations (\eg the $1$ and $2$-independence of the output ranks).

\PIPDS uses  the work-stealing scheduler included in the \texttt{rayon}\footnote{\url{https://github.com/rayon-rs/rayon}} crate.
We exclusively use binary Fork-Join parallelism by means of the \texttt{rayon::join} function which requires no heap allocations after the worker pool was once initialized.
Given the widespread usage of \texttt{rayon}, it is very likely that the calling application already set up this pool.
Then, none of our algorithms cause any heap memory allocation.
A \texttt{rayon::join} incurs very little cost if both tasks are executed on the same worker.
Hence, we regularly define more than $2^{11}$ parallel subtasks --- allowing fine-grained work-balancing.
In case of a compatible pseudo-random number generator (requires \texttt{rand::SeedableRNG} trait), we use a deterministic sequence to derive the subtasks' generators from the provided generator.
Then, two runs from the same state yield the same permutation despite non-deterministic scheduling; this optional reproducibility can be crucial (\eg for debugging of the embedding code).

We empirically optimized the number of buckets~$k$ as $k = 64$ for medium sized inputs (below \qty{128}{\mebi\byte}) and $k = 256$ for larger instances.
This threshold is a compromise over various computer architectures with medium to low impact.
The vast majority of code is implemented in the \emph{safe} subset of Rust.
In \cref{sec:evaluation}, we use an optional highly optimized implementation of \RoughScatter that requires pointer arithmetic and memory accesses without explicit boundary checks which is considered \emph{unsafe} in Rust.
On the \texttt{x86} platform, the code executes $2 \lfloor 2 / \lceil \log_2(k) \rceil \rfloor$ (i.e. $20$ for $k=64$ and $16$ for $k=256$) random assignments without any branching instructions.
This allows a high utilization of the CPU's pipeline which is further increased by explicitly prefetching the memory locations.
Further, instead of using a standard \texttt{swap}$(x, y)$ with three move operations (namely $t \gets x,\ x \gets y,\ y\gets t$ where $t$ is a temporary storage), we use two temporary variables resulting in $2 + \epsilon$ data movements per assignment.
The resulting assignment process is at least five times faster than any weighted sampling strategy we experimented with (including fast implementations of \cite{DBLP:journals/mst/MatiasVN03} and various rejection schemes in spirit of \cite{DBLP:conf/esa/BerenbrinkHK0PT20}).

\IPDS and \PIPDS use a \FY implementation for instances below $2^{18}$ items that resorts to \qty{32}{\bit} arithmetic and often produces two indices from one random \qty{64}{\bit} word.

The repository includes highly optimized sequential and parallel reimplementations of \algo{MergeShuffle} which include similar techniques as above.
Their performance is incomparable with the original implementation\footnote{\url{https://github.com/axel-bacher/mergeshuffle}} which, on the one hand, includes handcrafted assembly code for merging, but, on the other hand, uses the \texttt{rdrand} instruction~\cite{intel-software-developer-manual} to acquire random bits;
depending on the specific processor, \texttt{rdrand} one to two orders of magnitude slower than \texttt{Pcg64Mcg}.~\cite{agner-fog-instr-tab,Route_2017}
Further, both choices are highly non-portable.
Overall our portable implementation using \texttt{Pcg64Mcg} is faster than the original.
Observe that \algo{MergeShuffle} has linear parallel depth since only independent subproblems are executed in parallel while the merging of the first recursion layer is purely sequential.

\section{Empirical evaluation}\label{sec:evaluation}
In this section, we investigate the performance of multiple shuffling algorithms for diverse settings.
If not stated differently, we use the following standard parameters:
\begin{itemize}
	\item 
	Measurements are collected on a machine with an \textsl{AMD EPYC 7702P processor} with 64 cores and 128 hardware threads, 512 GB RAM, running \textsl{Ubuntu 20.04.5}, \texttt{rustc 1.68.0-nightly (92c1937a9)} and \texttt{gcc-10}.
	We use standard release builds (\texttt{cargo --release}, \texttt{g++ -O3}) without machine-specific optimizations.
	
	\item 
	Experiments focus on the fast pseudo-random number generator, \texttt{Pcg64Mcg}, as this choice exposes overheads in the shuffling algorithms rather in the generators itself.
	In \cref{fig:rngs}, we report the performance for different generators.
	The relative performance of algorithms remains qualitatively similar for different randomness sources, though \IPDS and \PIPDS are less affected by the generator choice as \FY variants.
	
	\item 
	Experiments focus on \qty{64}{\bit} integers, which seems to be a typical index size in data sets of several \qty{100}{\gibi\byte}. 
	As indicated in \cref{fig:byte_scale}, the throughput (measured in bytes per second) increases for larger elements since the per-element overhead shrinks.
	Again, \IPDS and \PIPDS exhibit a smaller spread than \FY variants.
	
	\item All performance measurements reported are the mean of at least five runs.
	To reduce systematic errors, an individual run is the average of $N$ repetitions where $N$ is chosen such that the measured time exceeds \qty{100}{\milli\second}.
	Consecutive repetitions use different locations in a larger memory region to simulate a \emph{cold start} where the input is not already cached.
\end{itemize}

\subsection{Memory usage and allocation costs}\label{subsec:exp-memory}
\begin{figure}
	\begin{subfigure}{0.45\textwidth}
		\includegraphics[width=\textwidth]{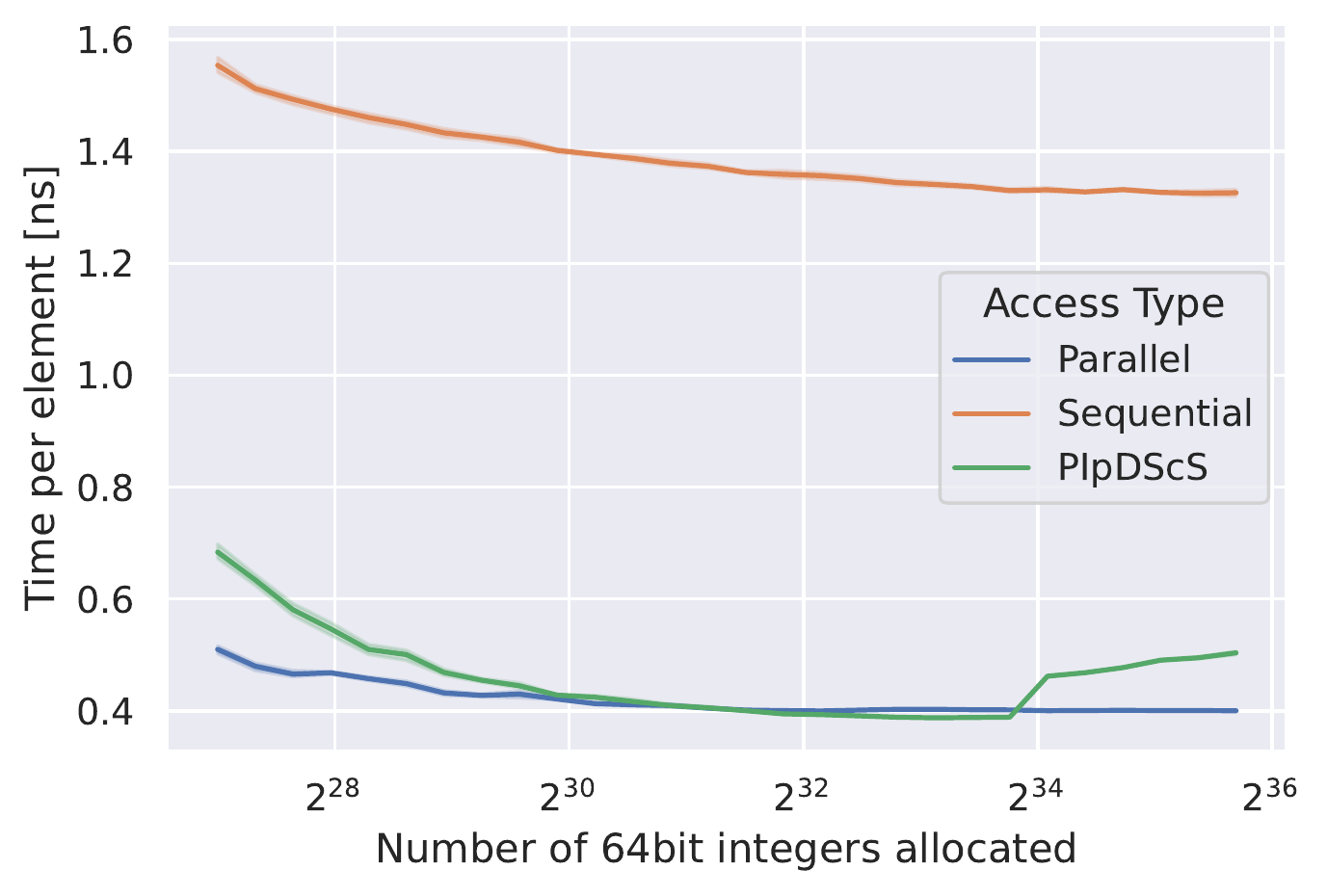}
		\caption{Runtime cost of memory allocation}
		\label{subfig:allocation-cost}
	\end{subfigure}\hfill
	\begin{subfigure}{0.45\textwidth}
		\includegraphics[width=\textwidth]{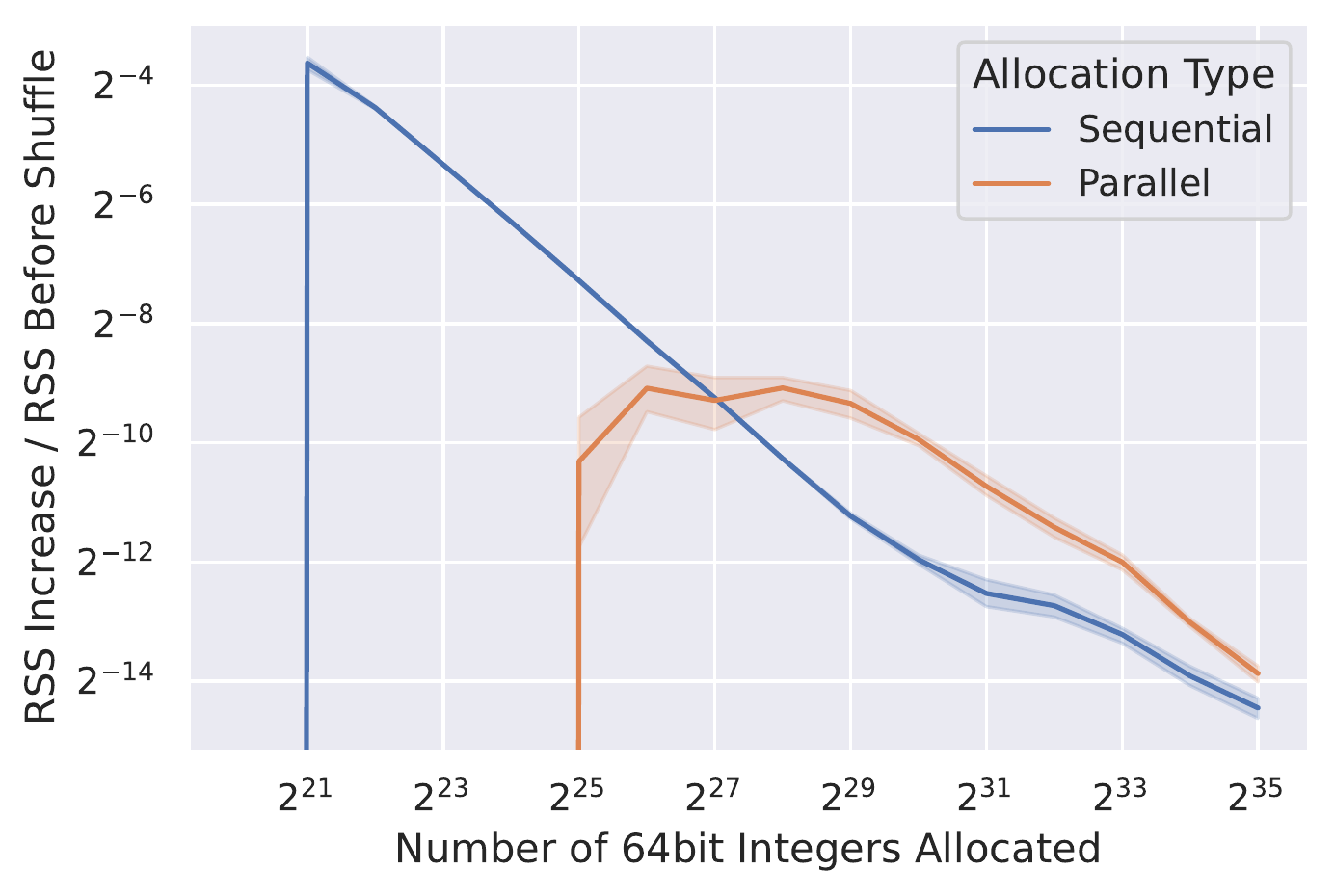}
		\caption{Memory usage of \PIPDS implementation}
		\label{subfig:memory-usage}
	\end{subfigure}
	\caption{Measurements of memory runtime costs and memory usage as described in \cref{subsec:exp-memory}.}
\end{figure}

One important motivation for this work is the runtime cost of allocating large amounts of memory which we quantify in \cref{subfig:allocation-cost}.
For each measurement, we obtain a certain amount of data using the low level \texttt{libc::malloc} instruction, initialize it, and then return the data using \texttt{libc::free}.
For large volumes, \texttt{malloc} requests the operating system to map a certain memory size into virtual memory.
Critically, the memory will only be backed by physical memory if it is actually accessed.
For this reason, we initialize the data twice, and subtract the second round from the first one.
The difference between both runs is the time it takes the system to provide the physical memory (without initializing it).
Our measurements also suggest that writing the values in parallel does not scale well --- despite an investment of 128 threads, we observe only a speedup of $4.9$ for the initialization which is reduced to $3.1$ for the whole process since \texttt{malloc} and \texttt{free} are sequential.
For reference, we included the runtime of \PIPDS and find that shuffling the data in parallel takes roughly as long as to acquire the memory needed to store a copy --- without copying it.

In \cref{subfig:memory-usage}, we report the effective memory usage of \PIPDS.
We start a dedicated process for each run and measure the \emph{maximal resident set size} (RSS) of the process (\ie the maximal amount of memory that was physically backed at any time during the execution).
We measure the RSS before and after the invocation of \PIPDS and report the relative growth.
As already discussed in \cref{sec:implementation}, \texttt{rayon}'s worker pool needs to be initialized once.
If we allocated the data sequentially, \PIPDS implicitly sets up a pool resulting in 1631 heap operations to reserve a total of \qty{923}{\kibi\byte}.
After a parallel allocation, on the other hand, no heap operations are carried out.
Even then, we observe a small increase of the RSS for large inputs --- this seems to be caused by growing stack memory of the 128 active threads.
In this case, the largest observed growth is \qty{0.2}{\percent} and diminishes for very large inputs.

\subsection{Performance overview}\label{subsec:exp-individual}
\begin{figure}
	\includegraphics[width=\textwidth]{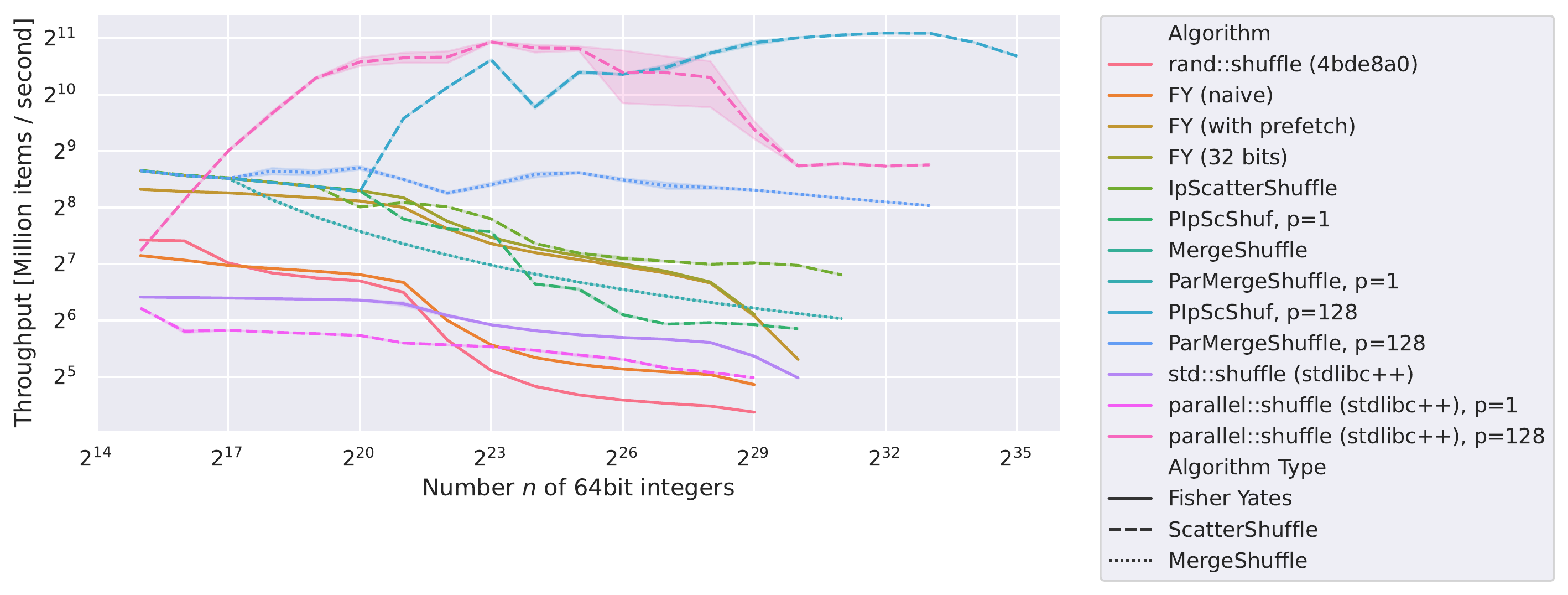}
	\caption{
		Performance of several shuffling algorithms with a time budget of \qty{30}{\second} per run.
		\FY, \IPDS, and \PIPDS are our own implementations.
		\texttt{std::shuffle} and \texttt{parallel:shuffle} are implemented in C++.
		For parallel algorithms, we indicate the number of cores available as $p$.
	}
	\label{fig:size_scale}
\end{figure}

\begin{figure}
	\includegraphics[width=\textwidth]{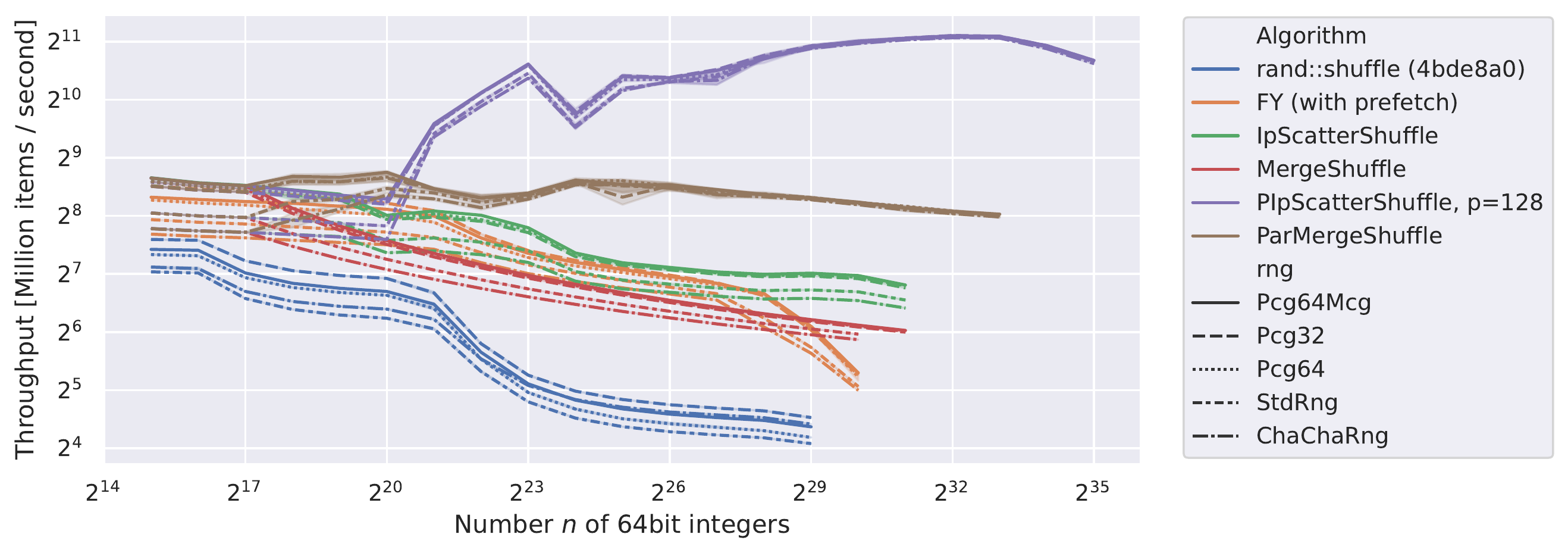}
	\caption{
		Performance of selected algorithms with different pseudo-random number generators.
	}
	\label{fig:rngs}
\end{figure}

\begin{figure}
	\includegraphics[width=\textwidth]{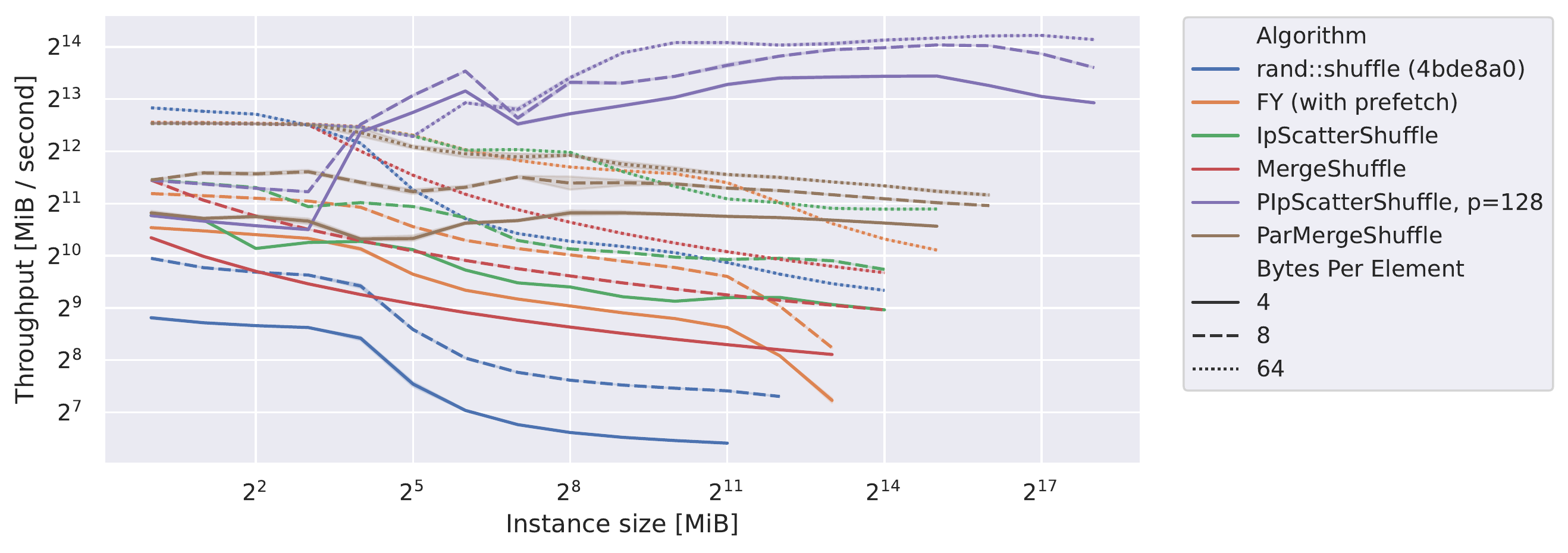}
	\caption{
		Performance of selected algorithms with different element sizes.
	}
	\label{fig:byte_scale}
\end{figure}

In \cref{fig:size_scale}, we report the performance of several shuffling implementations.
For each run, we set a timeout of \qty{30}{\second} and stop a graph after its algorithm hit said budget.
The only exception is our \PIPDS implementation with a runtime of \qty{20.8}{\second} for the largest data point.

The two fastest algorithms are \texttt{parallel::shuffle} (a C++ implementation of \SandShufFull included in \texttt{stdlibc++}) and our \PIPDS.
For relatively small inputs \SandShufFull is faster than \PIPDS which takes the lead for inputs larger than \qty{256}{\mebi\byte}.

All algorithms exhibit deteriorating throughput for larger inputs.
This is remarkable for \FY derivatives which have a predicted linear runtime.
Their slowdown can be attributed to cache misses and related effects of the memory hierarchy.
By comparing \texttt{FY (naive)} with \texttt{FY (with prefetch)}, we demonstrate that memory latency can be hidden  by explicitly prefetching memory locations ahead of time.\footnote{We use a ring buffer to generate and prefetch random indices 16 swaps prior to the actual swap.}
However, prefetching only helps to simulate a slightly larger cache and we still observe a significant performance drop around \qty{2}{\gibi\byte}.
The pronounced slowdown of the de facto standard Rust shuffle (\texttt{rand::shuffle}) is caused by a recent optimization for small ranges. It maximizes the number of random indices obtained from a single random \qty{32}{\bit} word and becomes less efficient for larger input sizes.

As predicted, all implementations based on \SandShufFull exhibit a $\log_k(n)$ dependency in their runtime; this is especially visible for \PIPDS when executed on a single core.
The parallel implementation is more affected than the dedicated sequential variant since each recursion adds additional overheads for splitting larger subproblems into independent tasks.

Due to incompatible software libraries, we were unable to include the performance measurements of \cite{DBLP:conf/apocs/GuOS21} in our campaign.
However, \cref{fig:hosts} reports a higher throughput for \PIPDS on a quad core laptop (i7-8550U) than \cite[Table 6]{DBLP:conf/apocs/GuOS21} for a quad-socket server (E7-8867 v4) with 72 ~cores --- while \PIPDS use less memory and includes the computation time of random bits.

\subsection{Parallel execution of sequential algorithms}
\begin{figure}
	\includegraphics[width=\textwidth]{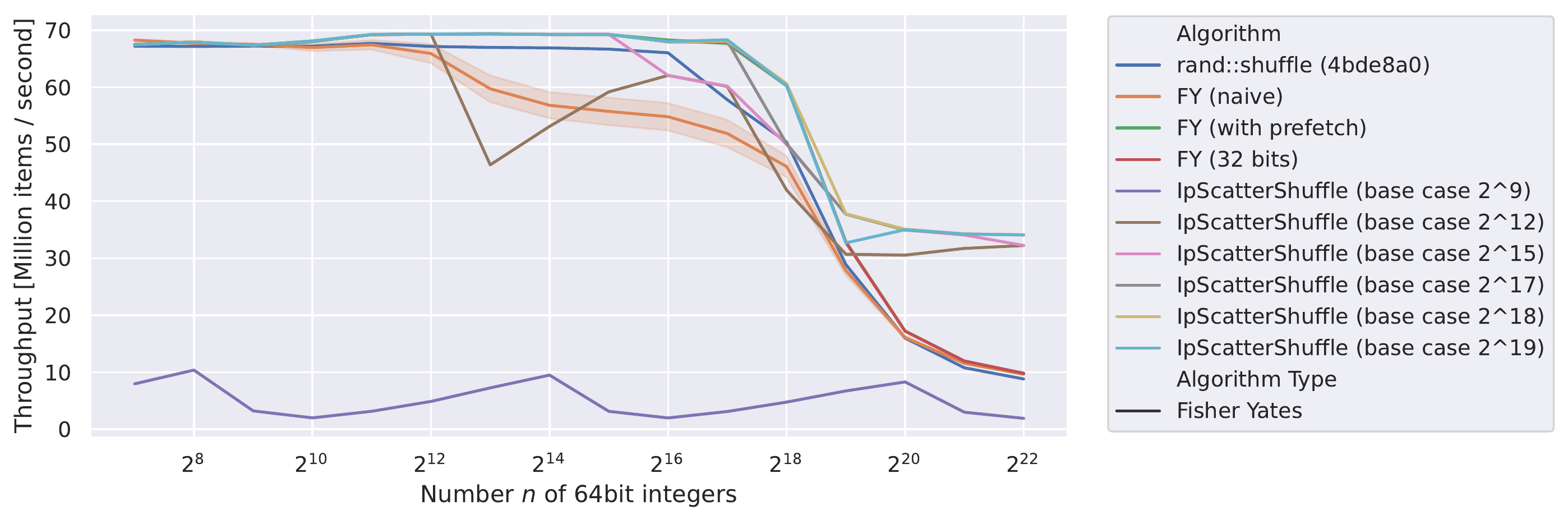}
	\caption{
		Performance of several sequential shuffling algorithms run in parallel on different data.
	}
	\label{fig:par_seq_scale}
\end{figure}

When selecting an appropriate base case algorithm for \PIPDS, \cref{fig:size_scale} can be misleading as it reports the performance of sequential algorithms executed in isolation.
In this setting, the studied algorithm has much more resources at its disposal compared to the case where several instances are executed in parallel on independent data.
This might also be relevant in different scenarios, \eg if (unbeknownst to the user) a sequential numerical simulation, which is executed in the ``cloud'', gets colocated with other memory-intensive tasks.

\Cref{fig:par_seq_scale} is recorded similarly to \cref{fig:size_scale} with the difference that we execute 128 independent tasks in parallel and report the mean of their individual runtime as a single run.
To avoid scheduling artifacts, we discard and repeat any run in which the wall-time of the experiment (\ie from the start of the first thread to the termination of the last) is \qty{20}{\percent} larger than the mean runtime of the individual tasks.

In this setting, we observe that memory becomes the dominating bottleneck; with minor exceptions (such as too small base case sizes), all algorithms exhibit roughly the same performance for instances below \qty{1}{\mebi\byte} (the CPU has \qty{256}{\mebi\byte} L3 cache that is now shared among 128 threads).
For larger instances, \IPDS variants are more than 3 times faster than the \FY variants.
Also observe that the contributions of implementation details, such as prefetching or base case size, pale in comparison the importance of memory locality. 

\subsection{Parallel scaling}\label{subsec:exp-parallel}
\begin{figure}
	\begin{subfigure}[t]{0.49\textwidth}
		\includegraphics[width=\textwidth]{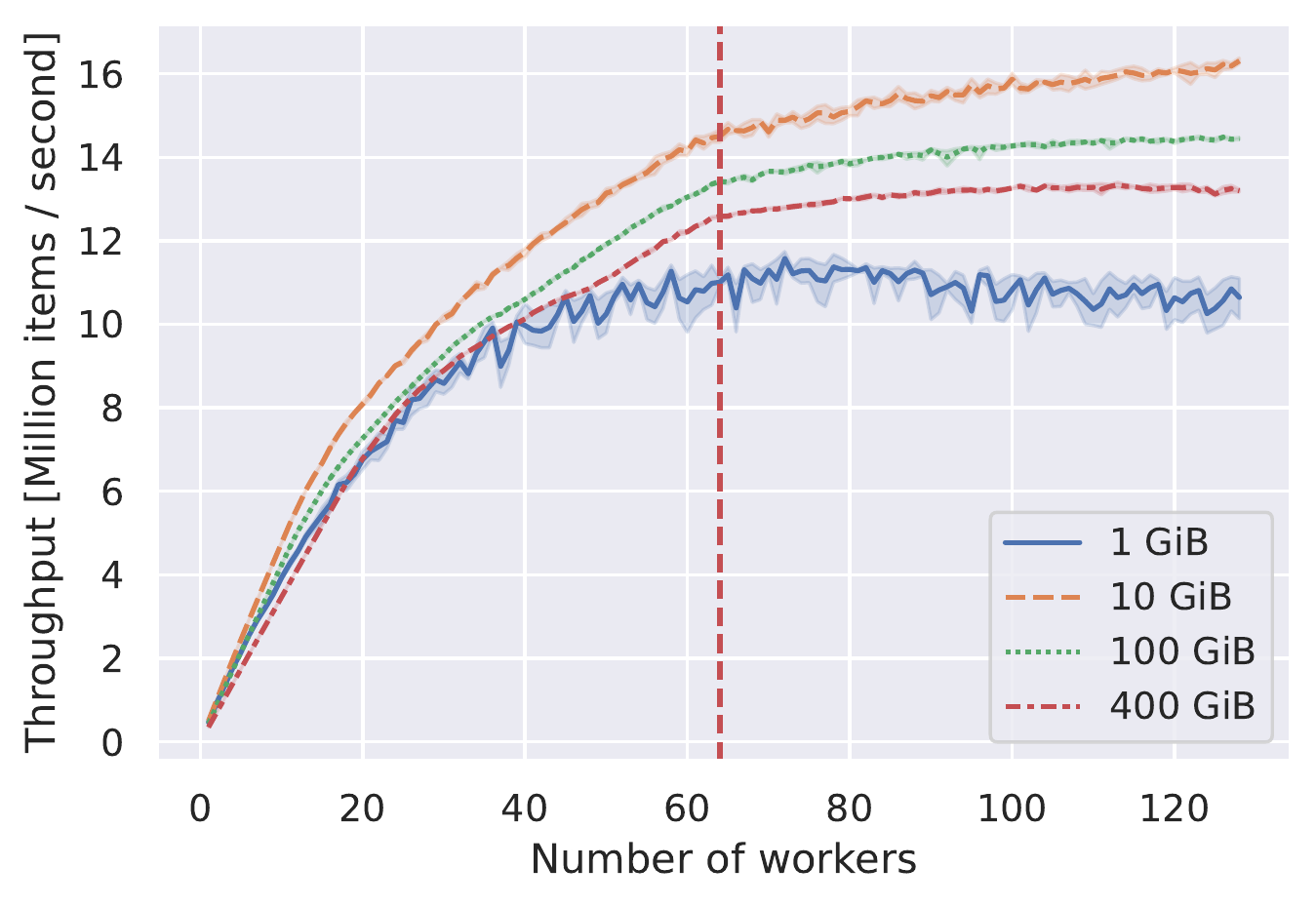}
		\caption{
			Strong scaling of \PIPDS.
			The vertical line indicates the number of physical cores.
		}
		\label{subfig:par-strong-scale}
	\end{subfigure}\hfill
	\begin{subfigure}[t]{0.49\textwidth}
		\includegraphics[width=\textwidth]{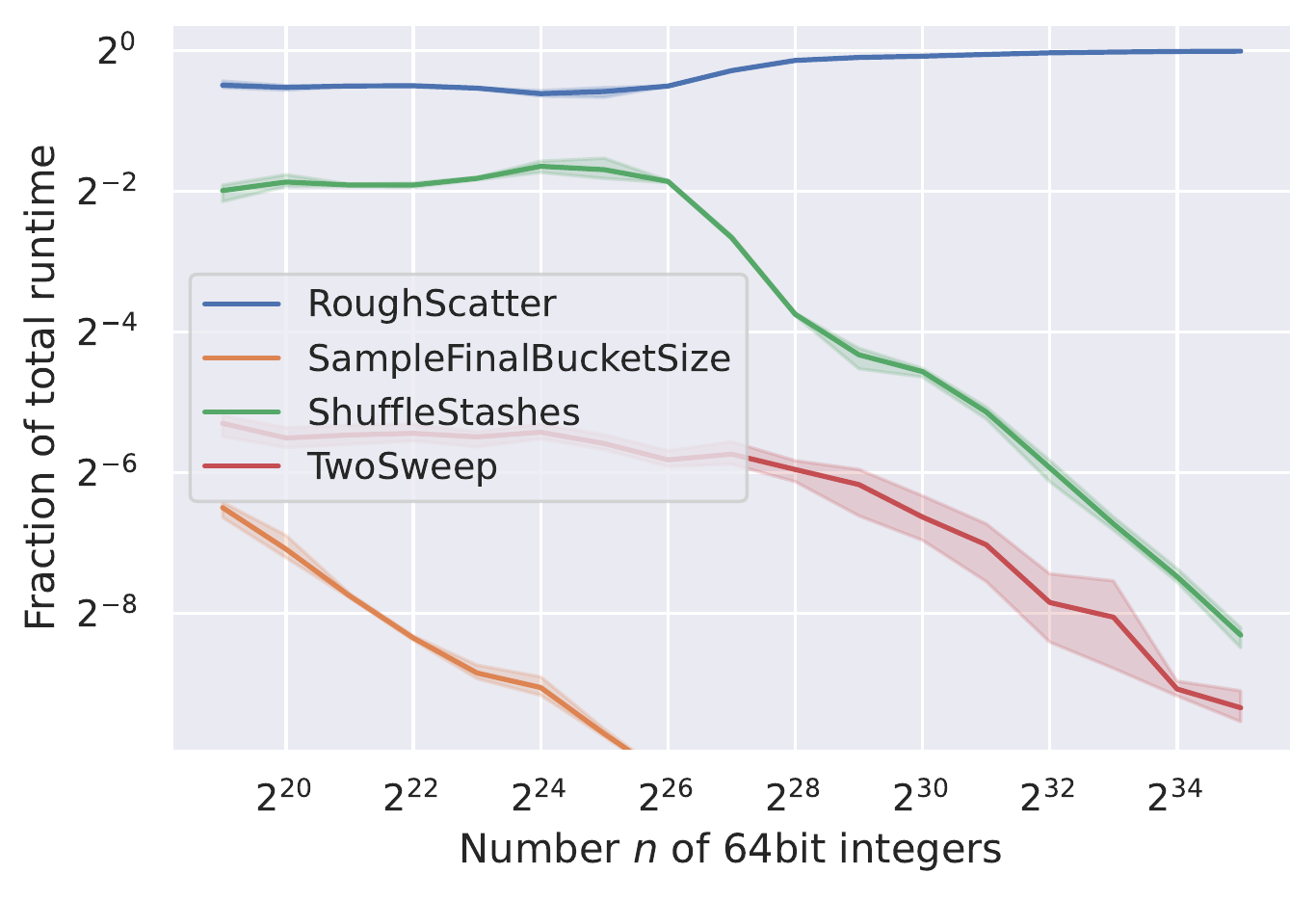}
		\caption{Fraction of runtime of the first recursion layer of \PIPDS with $128$ cores.}
		\label{subfig:parallel_regions}
	\end{subfigure}
	\caption{Parallel performance of \PIPDSFull.}
\end{figure}
To quantify the parallel speedup of \PIPDS, we carry out a strong scaling experiment as follows.
For fixed input sizes, we execute \IPDS and the fastest \FY implementation as base lines and then profile \PIPDS for an increasing number of workers.
In \cref{subfig:par-strong-scale}, we report the parallel speedup over the fast sequential implementation (\IPDS in all cases).

For data set sizes of \qty{10}{\gibi\byte} and larger, the speedup over the fastest sequential solution approaches up to 16.
The self speedup is larger (\eg 32.2 for the \qty{100}{\gibi\byte}), indicating a good scalability that is somewhat offset by the additional overhead of the parallel implementation.
For the same instance, \PIPDS is 142 times faster than \FYFull.

We see a substantial increase in speed until the number of workers matches the number of physical cores; using virtual cores (simultaneous multi-threading) has little impact.
This is to be expected since shuffling is memory-bound and virtual cores primarily help to saturate arithmetic units of super-scalar processors, but do not affect memory performance.

\subsection{Performance on different machines}
\begin{figure}
	\includegraphics[width=\textwidth]{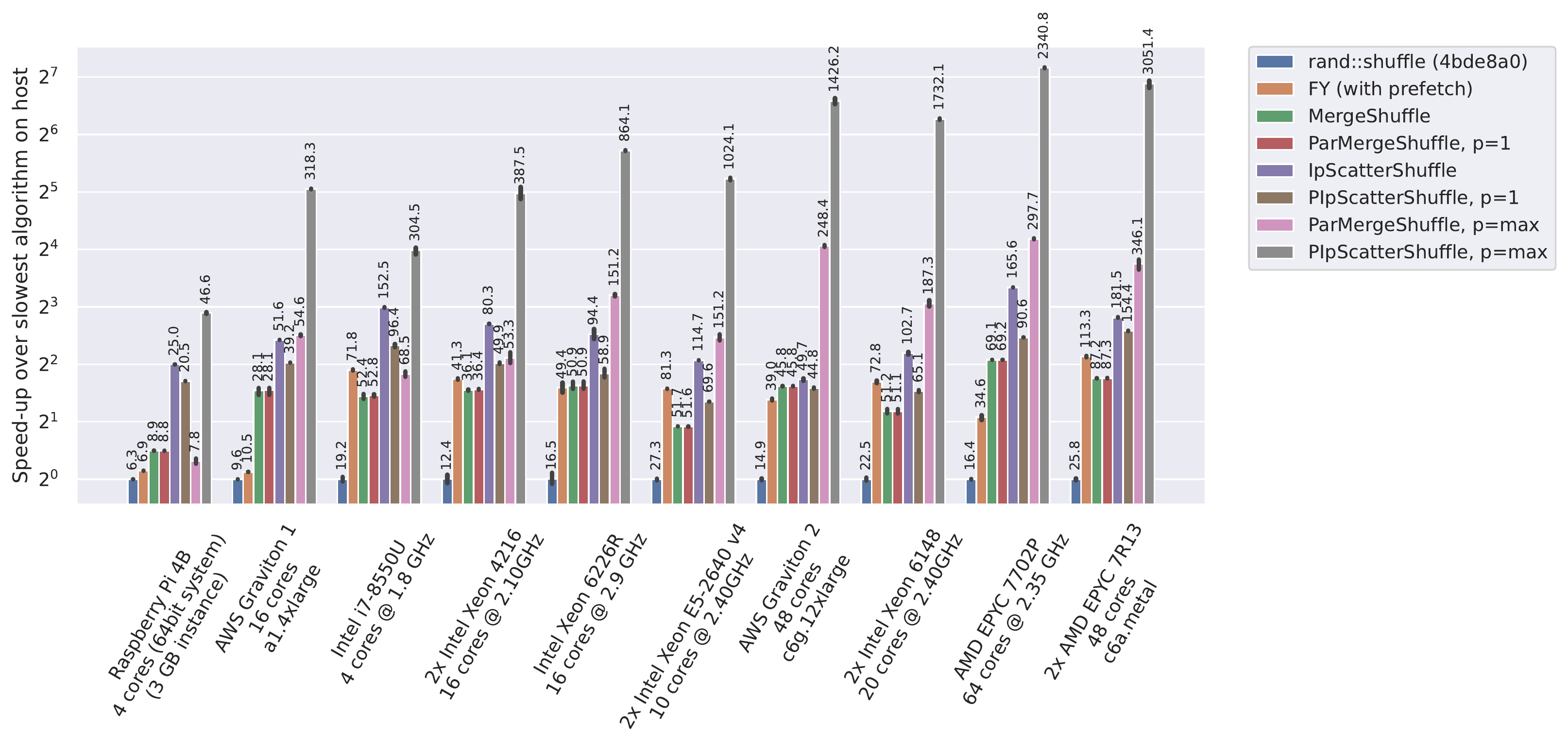}
	\caption{
		Performance of selected algorithms on different computers shuffling \qty{64}{\bit} integers.
		The length of a bar corresponds to the speedup compared to the slowest algorithm on that system.
		The numbers above a bar indicate the absolute through-put in million elements per second.
	}
	\label{fig:hosts}
\end{figure}

To verify that our empirical findings are representative for modern computers, we quantify the performance of shuffling on different machines in \cref{fig:hosts}.
The machines are diverse and ranging from a single-board computer, over a laptop, to dual-socket servers (covering more than two orders in magnitude in purchase prize).
They use different instruction sets, micro-architectures, processor manufactures, and core counts.

To accommodate most systems, we selected an instance size of \qty{10}{\gibi\byte} with the exception of the Raspberry PI~4B which features only \qty{4}{\gibi\byte} of main memory. 
The measurements consist of runs of sequential algorithms, sequential runs of parallel algorithms (indicated by $p=1$), and parallel runs with one worker per hardware thread (indicated by $p=\text{max}$).
The maximal throughput of the fastest system is 50 times higher than that of the slowest system.
In all cases \texttt{rand::shuffle} is the slowest contender, \IPDS the fastest sequential implementation, and \PIPDS the overall fastest solution.

\subsection{Relative performance of subproblems}
When designing, implementing, and analyzing \IPDS and \PIPDS we focused on the fast opportunistic \RoughScatter which then requires the additional \algo{FineScatter} post-processing to deal with the few remaining items.
While we heavily optimized \RoughScatter, we opted for simple and easy to implement solutions in \algo{FineScatter}.

To empirically support this design decision, we measure the runtime of the first recursion layer of \PIPDS for a wide range of input sizes.
In \cref{subfig:parallel_regions}, we then report the relative wall time of the four sub-algorithms that constitute said layer.
In our implementation only \RoughScatter is executed in parallel with 128 workers available while the remaining parts are sequential algorithms.
Despite the asymmetry in available workers, we observe that for $n \ge 2^{27}$ \ParRoughScatter accounts for more than \qty{90}{\percent} (\qty{99}{\percent} for $n \ge 2^{33}$) of the runtime.
This supports our design decisions since optimizing \algo{FineScatter} leads to diminishing results.

\section{Quantifying the hidden constants}
\begin{figure}
	\begin{center}
		\includegraphics[width=0.45\textwidth]{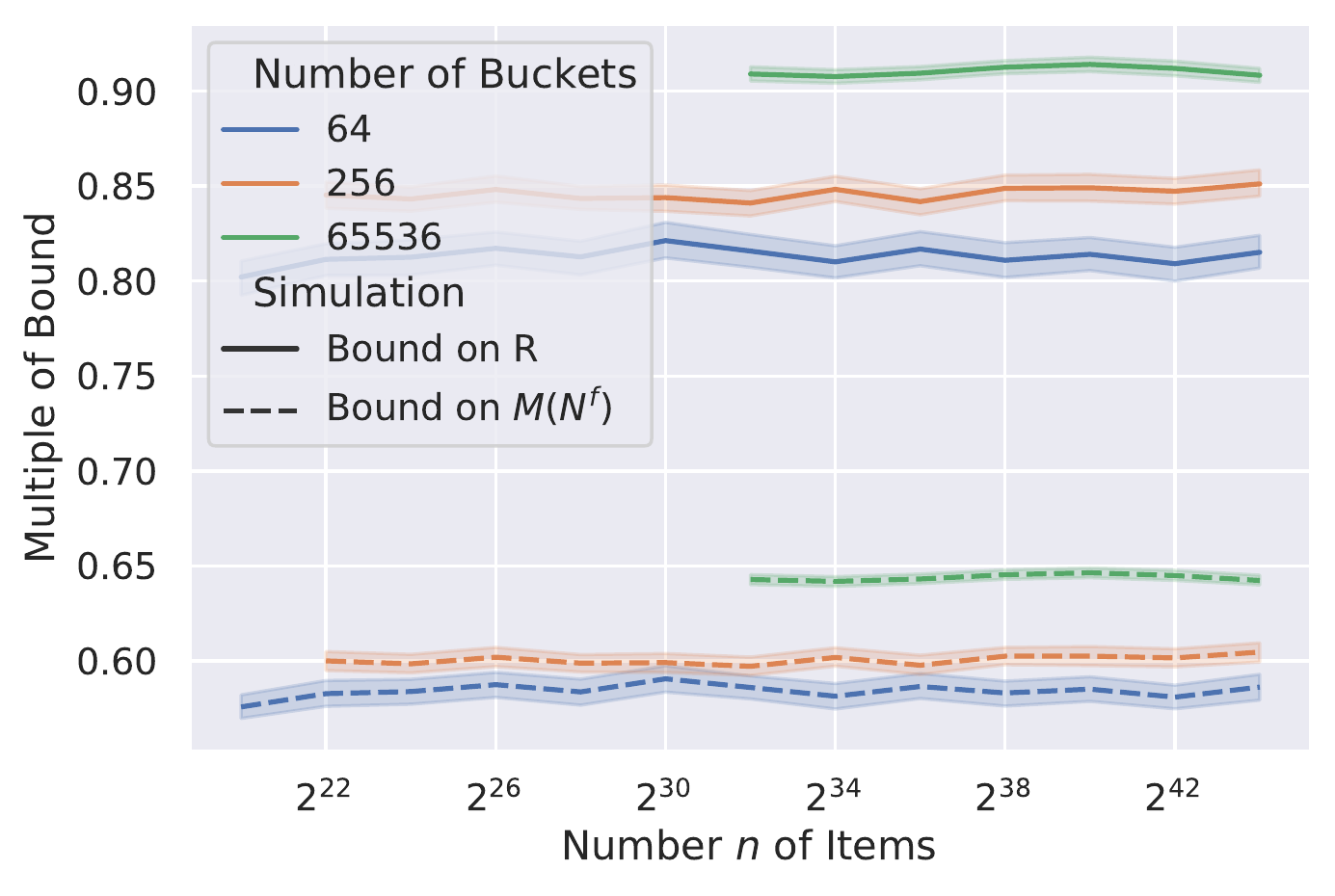}
	\end{center}

	\caption{Simulation of \cref{lem:r_bound} and \cref{cor:two-sweep-time}.}
	\label{subfig:hidden-costs}
\end{figure}
In \cref{lem:r_bound}, we bound the number $R \le \sqrt {2 n k \log k}$  of items that remain staged after \RoughScatter whp.
Additionally, in \cref{cor:two-sweep-time}, we bounded the complexity of \algo{TwoSweep} by $\Oh{k \sqrt{nk\log k}}$.
To provide empirical evidence and study the hidden constants, we simulate both processes.
In \cref{subfig:hidden-costs}, we report the mean over $1000$ independent runs divided by $\sqrt {2 n k \log k}$ and $k \sqrt{nk\log k}$ respectively.
Recall that our implementations use $k = 64$ and $k=256$; we additionally simulated $k=2^{16}$ as an accommodating upper bound for the foreseeable future.
The small growth in~$k$ visible in \cref{subfig:hidden-costs} is due to the small $k$ values.
Simulations with $k$ up to $2^{28}$ agree with \cref{lem:r_bound} and approach a factor of $0.66$ in \cref{cor:two-sweep-time}.

\bibliography{references,dblp-downloaded}

\begin{thebibliography}{10}

\bibitem{DBLP:conf/alenex/Allendorf0PTW22}
Daniel Allendorf, Ulrich Meyer, Manuel Penschuck, Hung Tran, and Nick Wormald.
\newblock Engineering uniform sampling of graphs with a prescribed power-law
  degree sequence.
\newblock In {\em {ALENEX}}, pages 27--40. {SIAM}, 2022.

\bibitem{DBLP:conf/gascom/BacherBHL18}
Axel Bacher, Olivier Bodini, Alexandros Hollender, and J{\'{e}}r{\'{e}}mie~O.
  Lumbroso.
\newblock Mergeshuffle: a very fast, parallel random permutation algorithm.
\newblock In {\em GASCom}, volume 2113 of {\em {CEUR} Workshop Proceedings},
  pages 43--52. CEUR-WS.org, 2018.

\bibitem{DBLP:journals/jct/BenderC78}
Edward~A. Bender and E.~Rodney Canfield.
\newblock The asymptotic number of labeled graphs with given degree sequences.
\newblock {\em J. Comb. Theory, Ser. {A}}, 24(3):296--307, 1978.

\bibitem{DBLP:journals/sigact/BentleyHS80}
Jon~Louis Bentley, Dorothea Haken, and James~B. Saxe.
\newblock A general method for solving divide-and-conquer recurrences.
\newblock {\em {SIGACT} News}, 12(3):36--44, 1980.

\bibitem{DBLP:conf/esa/BerenbrinkHK0PT20}
Petra Berenbrink, David Hammer, Dominik Kaaser, Ulrich Meyer, Manuel Penschuck,
  and Hung Tran.
\newblock Simulating population protocols in sub-constant time per interaction.
\newblock In {\em {ESA}}, volume 173 of {\em LIPIcs}, pages 16:1--16:22.
  Schloss Dagstuhl - Leibniz-Zentrum f{\"{u}}r Informatik, 2020.

\bibitem{DBLP:conf/ISCApdcs/CongB05}
Guojing Cong and David~A. Bader.
\newblock An empirical analysis of parallel random permutation algorithms {ON}
  smps.
\newblock In {\em {PDCS}}, pages 27--34. {ISCA}, 2005.

\bibitem{DBLP:books/daglib/0023376}
Thomas~H. Cormen, Charles~E. Leiserson, Ronald~L. Rivest, and Clifford Stein.
\newblock {\em Introduction to Algorithms, 3rd Edition}.
\newblock {MIT} Press, 2009.

\bibitem{DBLP:books/sp/Devroye86}
Luc Devroye.
\newblock {\em Non-Uniform Random Variate Generation}.
\newblock Springer, 1986.

\bibitem{DBLP:books/ph/Dijkstra76}
Edsger~W. Dijkstra.
\newblock {\em A Discipline of Programming}.
\newblock Prentice-Hall, 1976.

\bibitem{ferguson1989solved}
Thomas~S. Ferguson.
\newblock Who solved the secretary problem?
\newblock {\em Stat. Sci.}, 4(3):282--289, 1989.

\bibitem{agner-fog-instr-tab}
Agner Fog.
\newblock Instruction tables.
\newblock URL: \url{https://www.agner.org/optimize/instruction_tables.pdf}.

\bibitem{DBLP:journals/jpdc/FunkeLMPSSSL19}
Daniel Funke, Sebastian Lamm, Ulrich Meyer, Manuel Penschuck, Peter Sanders,
  Christian Schulz, Darren Strash, and Moritz von Looz.
\newblock Communication-free massively distributed graph generation.
\newblock {\em J. Parallel Distributed Comput.}, 131:200--217, 2019.

\bibitem{DBLP:conf/fun/GruberHR07}
Hermann Gruber, Markus Holzer, and Oliver Ruepp.
\newblock Sorting the slow way: An analysis of perversely awful randomized
  sorting algorithms.
\newblock In {\em {FUN}}, volume 4475 of {\em Lecture Notes in Computer
  Science}, pages 183--197. Springer, 2007.

\bibitem{DBLP:conf/apocs/GuOS21}
Yan Gu, Omar Obeya, and Julian Shun.
\newblock Parallel in-place algorithms: Theory and practice.
\newblock In {\em {APOCS}}, pages 114--128. {SIAM}, 2021.

\bibitem{NIPS2013_c3c59e5f}
Chris Hinrichs, Vamsi~K Ithapu, Qinyuan Sun, Sterling~C Johnson, and Vikas
  Singh.
\newblock Speeding up permutation testing in neuroimaging.
\newblock In C.~J. C.~Burges et~al., editor, {\em Advances in Neural
  Information Processing Systems}, volume~26, pages 890--898. Curran
  Associates, Inc., 2013.

\bibitem{intel-software-developer-manual}
{Intel Corporation}.
\newblock Intel 64 and ia-32 architectures software developer’s manual, 2022.

\bibitem{DBLP:books/aw/Knuth81}
Donald~E. Knuth.
\newblock {\em The Art of Computer Programming, Volume {II:} Seminumerical
  Algorithms, 2nd Edition}.
\newblock Addison-Wesley, 1981.

\bibitem{DBLP:conf/irregular/Leiserson97}
Charles~E. Leiserson.
\newblock Programming irregular parallel applications in cilk.
\newblock In {\em {IRREGULAR}}, volume 1253 of {\em Lecture Notes in Computer
  Science}, pages 61--71. Springer, 1997.

\bibitem{DBLP:journals/tomacs/Lemire19}
Daniel Lemire.
\newblock Fast random integer generation in an interval.
\newblock {\em {ACM} Trans. Model. Comput. Simul.}, 29(1):3:1--3:12, 2019.

\bibitem{DBLP:journals/mst/MatiasVN03}
Yossi Matias, Jeffrey~Scott Vitter, and Wen{-}Chun Ni.
\newblock Dynamic generation of discrete random variates.
\newblock {\em Theory Comput. Syst.}, 36(4):329--358, 2003.

\bibitem{DBLP:journals/csys/McIlroyBM93}
Peter~M. McIlroy, Keith Bostic, and M.~Douglas McIlroy.
\newblock Engineering radix sort.
\newblock {\em Comput. Syst.}, 6(1):5--27, 1993.

\bibitem{meyerfailure}
SJ~Meyer.
\newblock A failure of structured programming, {Zilog Corp}.
\newblock Technical report, Software Dept. Technical Report, 1979.

\bibitem{exec_order_11497}
Richard Nixon.
\newblock Executive order 11497 --- amending the selective service regulations
  to prescribe random selection, 1969.

\bibitem{oneill:pcg2014}
Melissa~E. O'Neill.
\newblock Pcg: A family of simple fast space-efficient statistically good
  algorithms for random number generation.
\newblock Technical Report HMC-CS-2014-0905, Harvey Mudd College, Claremont,
  CA, September 2014.

\bibitem{DBLP:journals/corr/abs-2003-00736}
Manuel Penschuck, Ulrik Brandes, Michael Hamann, Sebastian Lamm, Ulrich Meyer,
  Ilya Safro, Peter Sanders, and Christian Schulz.
\newblock Recent advances in scalable network generation.
\newblock {\em CoRR}, abs/2003.00736, 2020.

\bibitem{DBLP:conf/random/RaabS98}
Martin Raab and Angelika Steger.
\newblock "balls into bins" - {A} simple and tight analysis.
\newblock In {\em {RANDOM}}, volume 1518 of {\em Lecture Notes in Computer
  Science}, pages 159--170. Springer, 1998.

\bibitem{Route_2017}
Matthew Route.
\newblock Radio-flaring ultracool dwarf population synthesis.
\newblock {\em The Astrophysical Journal}, 845(1):66, aug 2017.
\newblock \href {https://doi.org/10.3847/1538-4357/aa7ede}
  {\path{doi:10.3847/1538-4357/aa7ede}}.

\bibitem{DBLP:journals/ipl/Sanders98}
Peter Sanders.
\newblock Random permutations on distributed, external and hierarchical memory.
\newblock {\em Inf. Process. Lett.}, 67(6):305--309, 1998.

\bibitem{DBLP:conf/soda/ShunGBFG15}
Julian Shun, Yan Gu, Guy~E. Blelloch, Jeremy~T. Fineman, and Phillip~B.
  Gibbons.
\newblock Sequential random permutation, list contraction and tree contraction
  are highly parallel.
\newblock In {\em {SODA}}, pages 431--448. {SIAM}, 2015.

\bibitem{DBLP:conf/icse/SinglerK08}
Johannes Singler and Benjamin Konsik.
\newblock The {GNU} libstdc++ parallel mode: software engineering
  considerations.
\newblock In {\em IWMSE@ICSE}, pages 15--22. {ACM}, 2008.

\end{thebibliography}

\end{document}